\documentclass[11pt]{article}

\usepackage{a4wide,amsmath,amssymb,amsthm}
\usepackage{amsfonts}
\usepackage{fix-cm}
\usepackage{mathtools}
\usepackage{extarrows}
\usepackage{enumerate}
\usepackage{nicefrac}

\usepackage[utf8]{inputenc}
\usepackage[english]{babel}
\usepackage[pdftex]{graphicx}	
\usepackage[authoryear]{natbib}
\usepackage[affil-it]{authblk}

\usepackage{geometry}
\geometry{a4paper,left=1.5cm,right=1.5cm,top=1.5cm,bottom=1.5cm}

\usepackage[]{booktabs}
\usepackage{multirow,multicol}
\usepackage[flushleft]{threeparttable}
\usepackage{lscape}
\usepackage{longtable}

\usepackage{complexity}
\usepackage{marginnote}
\usepackage{textcomp}

\usepackage{url}
\usepackage{color}
\usepackage[x11names]{xcolor}
\colorlet{linkequation}{blue}
\usepackage[colorlinks]{hyperref}

\newcommand*{\SavedEqref}{}
\let\SavedEqref\eqref
\renewcommand*{\eqref}[1]{
\begingroup
\hypersetup{
    linkcolor=linkequation,
    linkbordercolor=linkequation,
}
\SavedEqref{#1}%
\endgroup
}

\hypersetup{
    citecolor=SpringGreen4,
    citebordercolor=SpringGreen4,
}

\pdfstringdefDisableCommands{
  \def\textit#1{#1}
}

\allowdisplaybreaks
\pagestyle{plain}   

\usepackage{setspace}
\doublespacing


\newtheorem{theorem}{Theorem}

\newtheorem{corollary}[theorem]{Corollary}
\newtheorem{lemma}[theorem]{Lemma}

\theoremstyle{definition}
\newtheorem{definition}[theorem]{Definition}

\theoremstyle{remark}
\newtheorem{remark}[theorem]{Remark}

\newcommand{\defeq}{\stackrel{\text{def}}{=}}
\DeclareMathOperator{\conv}{conv}
\DeclareMathOperator{\ext}{ext}
\DeclareMathOperator{\Span}{span}


\title{\Large Polyhedral results and stronger Lagrangean bounds for stable spanning trees}
\author[1]{Phillippe Samer}
\author[1]{Dag Haugland}
\affil[1]{\small University of Bergen, 
Department of Informatics, 
P.O. Box 7800, 5020, Bergen, Norway
\newline {\tt $\lbrace$samer@uib.no, dag@ii.uib.no$\rbrace$ }}

\date{May 30, 2022}


\begin{document}

\maketitle

\begin{abstract}
Given a graph $G=(V,E)$ and a set $C$ of unordered pairs of edges regarded as being in conflict, a stable spanning tree in $G$ is a set of edges $T$ inducing a spanning tree in $G$, such that for each $\left\lbrace e_i, e_j \right\rbrace \in C$, at most one of the edges $e_i$ and $e_j$ is in $T$. 
The existing work on Lagrangean algorithms to the \textsf{NP}-hard problem of finding minimum weight stable spanning trees is limited to relaxations with the integrality property. 
We exploit a new relaxation of this problem: fixed cardinality stable sets in the underlying conflict graph $H =(E,C)$.
We find interesting properties of the corresponding polytope, and determine stronger dual bounds in a Lagrangean decomposition framework, optimizing over the spanning tree polytope of $G$ and the fixed cardinality stable set polytope of $H$ in the subproblems. 
This is equivalent to dualizing exponentially many subtour elimination constraints, while limiting the number of multipliers in the dual problem to $|E|$. It is also a proof of concept for combining Lagrangean relaxation with the power of MILP solvers over strongly NP-hard subproblems.
We present encouraging computational results using a dual method that comprises the Volume Algorithm, initialized with multipliers determined by Lagrangean dual-ascent. 
In particular, the bound is within 5.5\% of the optimum in 146 out of 200 benchmark instances; it actually matches the optimum in 75 cases.
All of the implementation is made available in a free, open-source repository.

\

\textbf{Dedicated to the memory of Gerhard Woeginger}, a lasting inspiration to the first author, and also one of the pioneers in the study of stable spanning trees.

\

\textbf{Acknowledgements}.
A preliminary version of this work, including the results in Section~\ref{sec:lagrangean}, appears in the \textit{open access} proceedings of INOC~2022~--~the 10th International Network Optimization Conference \citep{samerINOC2022}.
The authors gratefully acknowledge the support by the Research Council of Norway through the research project 249994 CLASSIS.
\end{abstract}


\newpage

\section{Introduction}
\label{sec:intro}

Given an undirected graph $G=(V,E)$, with edge weights $w: {E \rightarrow \mathbb{Q}}$, and 
a family $C$ of unordered pairs of edges that are regarded as being in conflict,  
a stable (or conflict-free) spanning tree in $G$ is a set of edges $T$ inducing a spanning tree in $G$, such that for each $\left\lbrace e_i, e_j \right\rbrace \in C$, at most one of the edges $e_i$ and $e_j$ is in $T$.
The minimum spanning tree under conflict constraints (MSTCC) problem is to determine a stable spanning tree of least weight, or decide that none exists.
It was introduced by \cite{mstccLNCS2009,damPTM2011}, who also prove its \textsf{NP}-hardness.

Different combinatorial and algorithmic results about stable spanning trees explore the associated conflict graph $H =(E,C)$, which has a vertex corresponding to each edge in the original graph $G$, and where we represent each conflict constraint by an edge connecting the corresponding vertices in $H$.
Note that each conflict-free spanning tree in $G$ is a subset of $E$  which corresponds both to a spanning tree in $G$ and to a stable set (or independent set, or co-clique: a subset of pairwise non-adjacent vertices) in $H$. Therefore, one can equivalently search for stable sets in $H$ of cardinality exactly $|V|-1$ which do not induce cycles in the original graph $G$.

We have recently initiated the combinatorial study of stable sets of cardinality exactly $k$ in a graph \citep{samer2021}, where $k$ is a positive integer given as part of the input.
There are appealing research directions around algorithms, combinatorics and optimization for problems defined over fixed cardinality stable sets.
Also from an applications perspective, conflict constraints arise naturally in operations research and management science.
Stable spanning trees, in particular, model real-world settings such as communication networks with different link technologies (which might be mutually exclusive in some cases), and utilities distribution networks. In fact, the latter is a standard application of the quadratic minimum spanning tree problem \citep{qmst1992}, which generalizes the MSTCC one.

Exact algorithms to find stable spanning trees have been investigated for a decade now, building on branch-and-cut \citep{SamerUrrutiaLetters2015,CarrabsAnnals2018}, or Lagrangean relaxation \citep{Zhang2011,CarrabsNetworks2021} strategies.
Consider the natural integer programming (IP) formulation for the MSTCC problem:
\begin{alignat}{2}
\min         & \sum_{e \in E} w_{e} x_{e} \label{eq:mstcc_obj}\\
\text{s.t. } & \sum_{e \in E(S)} x_{e} \leq |S|-1,  \hspace{0.2cm}  && \text{ for each } S \subsetneq V, S \neq \emptyset , \label{eq:mstcc_sec}\\
             & \sum_{e \in E} x_{e} = |V|-1 , &&  \label{eq:mstcc_card}\\
             & x_{e_i} + x_{e_j} \leq 1,  \hspace{0.2cm}  &&  \text{ for each } \left\lbrace e_i , e_j \right\rbrace \in C , \label{eq:mstcc_conflict}\\
&  x_{e} \in \left\lbrace 0,1 \right\rbrace, \hspace{1cm}  &&  \text{ for each } e \in E . \label{eq:mstcc_binary}
\end{alignat}
While a considerable effort in the development of branch-and-cut algorithms led to more sophisticated formulations and contributed to a better understanding of our capacity to solve MSTCC instances by judicious use of valid inequalities,
the existing Lagrangean algorithms are limited to the most elementary approach.
Namely, a relaxation scheme dualizing conflict constraints $\eqref{eq:mstcc_conflict}$, which thus has the integrality property.
We review other aspects of the corresponding references in Section~\ref{sec:previous}.

The present paper takes the standpoint that the development of a full-fledged Lagrangean strategy to find stable spanning trees is an unsolved problem.
While we recognize different merits of previous work, we found it productive to investigate stronger Lagrangean bounds in this context: exploring more creative relaxation schemes, designing improved dual methods, all the while harnessing the polyhedral point of view and progress in IP computation.

The main idea of this paper is to offer an alternative starting point for this problem, building on fixed cardinality stable sets as an alluring handle to work on stable spanning trees.
After presenting some elementary properties of the corresponding polytope in Section~\ref{sec:polyhedral}, we use cardinality constrained stable sets again in Section~\ref{sec:lagrangean} to design a stronger relaxation scheme, based on Lagrangean \textit{decomposition} (LD).
We explain how classical results from the literature guarantee the superiority of such a  reformulation: both with respect to the quality of dual bounds, when compared to the straightforward relaxation, and with regard to the number of multipliers, when compared to an alternative framework to determine the same bounds (relax-and-cut dualizing violated subtour elimination constraints $\eqref{eq:mstcc_sec}$ dynamically).

We see the opportunity for renewed interest in LD in light of the progress in mixed-integer linear programming (MILP) computation.
Given the impressive speedup of MILP solvers over the past two decades, Dimitris Bertsimas and Jack Dunn are among a group of distinguished researchers who make a case for (exact) optimization over integers as the natural, correct model for several tasks within machine learning and towards interpretable artificial intelligence. 
This is the theme of their recent book~\citep{bertsimasMLbook2019}; see also \cite{BertsimasKingMazumder2016,BertsimasPauphiletVanParys2020}.
We draw inspiration from this philosophy (challenging assumptions previously deemed computationally intractable) to propose
less hesitation towards designing Lagrangean algorithms that
exploit subproblems for which, albeit strongly \textsf{NP}-hard, specialized solvers attain good performance.
Indeed, we present a proof of concept in the particular case of the MSTCC problem. 
We leverage a state-of-the-art branch-and-cut algorithm for fixed cardinality stable sets to an effective method to compute strong dual bounds for optimal stable spanning trees by means of LD.

In summary, our contributions are the following.
\begin{enumerate}
\item On the polyhedral combinatorics side, we present intersection properties and a bound on the dimension of the fixed cardinality stable set polytope, a relaxation of the stable spanning tree one.

\item We propose a sound analysis of different Lagrangean bounds published in the literature of the MSTCC problem, design a stronger reformulation based on LD, and justify its advantages both in theory and in a numerical evaluation.
We make a case for designing new algorithms combining LD and MILP solvers exploring strongly \textsf{NP}-hard subproblems.

\item We present a free, open-source software package implementing the complete algorithm. It welcomes extensions and eventual collaborations, besides offering a series of useful, general-purpose algorithmic components, \textit{e.g.} separation procedures, an LD based dual-ascent framework, an application of the Volume Algorithm framework implemented in COIN-OR.
\end{enumerate}

\section{Polyhedral results}
\label{sec:polyhedral}

As a first step towards knowledge about the polytope of stable spanning trees in a graph, we study elementary properties of the larger polytope $\mathfrak{C}(H, k)$ of fixed cardinality stable sets in the conflict graph $H=(E,C)$.
For conciseness, we abbreviate ``stable set of cardinality $k$'' as kstab in this work. 

We begin with the necessary notation.
Let $[n] \defeq \left\lbrace 1, \ldots, n \right\rbrace$, and let $\conv S$ denote the convex hull of a set $S$.
Recall that the \textit{incidence} (or \textit{characteristic}) \textit{vector} of a set $S \subset E = \left\lbrace e_1, \ldots, e_m \right\rbrace$ is defined as $\chi^{S} \in \left\lbrace 0,1 \right\rbrace^{|E|}$ such that $\chi^{S}_i = 1$ if and only if $e_i\in S$.
The family of all incidence vectors of kstabs in $H$ is denoted $\mathcal{F}_{\text{kstab}}(H,k)$. 
Hence 
$\mathfrak{C}(H,k) \defeq \conv \left\lbrace \chi^{S} : S \in \mathcal{F}_{\text{kstab}}(H,k) \right\rbrace$.

Also let $\mathcal{F}^{\uparrow}_{\text{kstab}}(H,k) \subset \{0,1\}^{|E|}$ denote the family of incidence vectors of stable sets of cardinality greater than or equal to $k$ in $H$, and let $\mathfrak{C}^{\uparrow}(H, k) \defeq \conv \mathcal{F}^{\uparrow}_{\text{kstab}}(H,k)$ denote their convex hull.
Define $\mathcal{F}^{\downarrow}_{\text{kstab}}(H,k)$ and $\mathfrak{C}^{\downarrow}(H, k)$ analogously for stable sets of cardinality at most $k$.
We omit the parameters $H$ and $k$ in such notation where it does not cause any confusion.
Likewise, we occasionally omit the indices in summations over all coordinates of a point to make a passage more readable, \textit{e.g.} $\sum \textbf{x}$ when it clearly means $\sum_{i \in [n]} x_i$. 
Finally, let $\ext \mathcal{P}$ denote the set of extreme points of a given polyhedron $\mathcal{P}$.

In the following, we present intersection properties connecting $\mathfrak{C}$, $\mathfrak{C}^\uparrow$, and $\mathfrak{C}^\downarrow$.

\begin{theorem}
Let $H$ be an arbitrary graph on $n$ vertices, and $k$ be a positive integer. 
\begin{enumerate}[i.]
\item $\mathfrak{C}(H,k)  =  \mathfrak{C}^\uparrow(H,k)  \cap  \mathfrak{C}^\downarrow(H,k)$.

\item $\mathfrak{C}(H,k)  =  \mathfrak{C}^\uparrow(H,k) \cap F  =  \mathfrak{C}^\downarrow(H,k) \cap F$,
where $F \defeq \left\lbrace x \in \mathbb{Q}^{n}: \sum_{u \in [n]} x_u = k \right\rbrace$. 
\end{enumerate}
\label{thm:poly:inter}
\end{theorem}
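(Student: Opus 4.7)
The plan is to prove part (ii) first by a direct convex-combination argument, and then to obtain part (i) as an easy corollary by noting that the linear functional $\sum_u x_u$ is already bounded below by $k$ throughout $\mathfrak{C}^\uparrow(H,k)$ and above by $k$ throughout $\mathfrak{C}^\downarrow(H,k)$, so that the two-sided intersection is automatically contained in $F$.

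For part (ii), the inclusions $\mathfrak{C}(H,k) \subseteq \mathfrak{C}^\uparrow(H,k) \cap F$ and $\mathfrak{C}(H,k) \subseteq \mathfrak{C}^\downarrow(H,k) \cap F$ are immediate: every kstab has cardinality equal to $k$ (hence at least $k$ and at most $k$), and this is preserved under convex combinations. For the reverse inclusion, I would take $x \in \mathfrak{C}^\uparrow(H,k) \cap F$ and write it as a convex combination $x = \sum_i \lambda_i \chi^{S_i}$ of vertices of $\mathfrak{C}^\uparrow(H,k)$; each $S_i$ is then a stable set of $H$ with $|S_i| \geq k$. Summing coordinates yields $k = \sum_u x_u = \sum_i \lambda_i |S_i|$, and since $\sum_i \lambda_i = 1$ with $\lambda_i \geq 0$ and $|S_i| \geq k$ for every $i$, the equality forces $|S_i| = k$ whenever $\lambda_i > 0$. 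Hence $x$ is in fact a convex combination of incidence vectors of kstabs, so $x \in \mathfrak{C}(H,k)$. The argument for $\mathfrak{C}^\downarrow(H,k) \cap F$ is fully symmetric, reversing the inequality sign throughout.

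For part (i), one inclusion is again immediate. For the other, note that $\sum_u x_u \geq k$ holds at every vertex of $\mathfrak{C}^\uparrow(H,k)$ and is therefore a valid inequality for the whole polytope, and analogously $\sum_u x_u \leq k$ is valid for $\mathfrak{C}^\downarrow(H,k)$. Any point in the intersection therefore satisfies $\sum_u x_u = k$, i.e. belongs to $F$. Combining this observation with part (ii) gives $\mathfrak{C}^\uparrow(H,k) \cap \mathfrak{C}^\downarrow(H,k) \subseteq \mathfrak{C}^\uparrow(H,k) \cap F = \mathfrak{C}(H,k)$, which finishes the argument.

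There is no real obstacle here; the whole statement is a clean extreme-point argument. The only point to be careful about is the legitimacy of representing an arbitrary element of $\mathfrak{C}^\uparrow(H,k)$ (resp.\ $\mathfrak{C}^\downarrow(H,k)$) as a convex combination of incidence vectors from $\mathcal{F}^{\uparrow}_{\text{kstab}}(H,k)$ (resp.\ $\mathcal{F}^{\downarrow}_{\text{kstab}}(H,k)$), but this is just the defining property of these $0/1$-polytopes as convex hulls of a finite family, together with Carathéodory's theorem if one wants an explicit bound on the number of terms.
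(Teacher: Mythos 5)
Your proof is correct; the only difference from the paper is organizational. You establish part (\textit{ii}) first by the direct averaging argument --- writing $x \in \mathfrak{C}^\uparrow \cap F$ as a convex combination of incidence vectors with $|S_i| \geq k$ and observing that $k = \sum_i \lambda_i |S_i|$ forces $|S_i| = k$ on every positively weighted term --- and then obtain part (\textit{i}) as a corollary, since the valid inequalities $\sum_u x_u \geq k$ on $\mathfrak{C}^\uparrow$ and $\sum_u x_u \leq k$ on $\mathfrak{C}^\downarrow$ already place the two-sided intersection inside $F$. The paper goes the other way: it proves (\textit{i}) directly by essentially your averaging idea (using membership in $\mathfrak{C}^\downarrow$, rather than in $F$, to rule out any vertex of $\mathfrak{C}^\uparrow$ with coordinate sum exceeding $k$), and then proves (\textit{ii}) by a separate polyhedral argument, namely that $\mathfrak{C}^\uparrow \cap F$ is the face of $\mathfrak{C}^\uparrow$ induced by the valid inequality $\sum \mathbf{x} \geq k$, so its vertices are vertices of $\mathfrak{C}^\uparrow$ lying on the hyperplane. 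Your route is slightly more economical, since one argument serves both parts and no facts about faces are needed; the paper's face-based view of (\textit{ii}) is not wasted either, as it sets up the machinery (Lemma~\ref{lemma:vertices_of_P_and_Q}) reused for Theorem~\ref{thm:poly:extreme_points_inter}. Your closing remark about Carath\'eodory is unnecessary but harmless: a finite convex-combination representation is already guaranteed by the definition of the convex hull of a finite set.
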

\begin{proof}

(\textit{i.}) 
$\mathfrak{C}  \subseteq  \mathfrak{C}^\uparrow  \cap  \mathfrak{C}^\downarrow$ 
follows from the fact that the convex hull of the intersection of two sets is contained in the intersection of the respective convex hulls.

For the other inclusion, let 
$\textbf{x}^* \in \mathfrak{C}^\uparrow  \cap  \mathfrak{C}^\downarrow$ 
be arbitrary. 
Without loss of generality, we write $\textbf{x}^*$ as a convex combination of $p$ vertices of $\mathfrak{C}^\uparrow$: 
\[
\textbf{x}^* = \sum_{i \in [p]} \lambda_i \textbf{y}^i, \text{ with } \lambda_i \geq 0 \text{ for each } i, \sum_{i \in [p]} \lambda_i = 1, \text{ and } \left\lbrace \textbf{y}^i \right\rbrace_{i \in [p]} \subseteq \ext\mathfrak{C}^\uparrow.
\]
Note that 
$\textbf{y}^i \in \mathfrak{C}^\uparrow \implies  \sum_{u \in [n]} y^i_u \geq k$ for each $i$.
Now, if $\sum_{u \in [n]} y^i_u > k$ for some $i \in [p]$, we derive from $\lambda_i \geq 0$ and $\sum \lambda_i = 1$ that $\sum_{u \in n} x^*_u > k$, and $\textbf{x} \not\in \mathfrak{C}^\downarrow$.
Hence $\sum_{u \in n} y^i_u = k$ for each $i \in [p]$, and $\left\lbrace \textbf{y}^i \right\rbrace_{i \in [p]} \subseteq \mathfrak{C}$.
By convexity of $\mathfrak{C}$, we conclude that $\textbf{x}^* \in \mathfrak{C}$.

(\textit{ii.}) It is immediate that $\mathfrak{C} \subseteq \mathfrak{C}^\uparrow \cap F$: 
if $\textbf{x}^* \in \mathfrak{C}$, we may write $\textbf{x}^*$ as the convex combination of incidence vectors of kstabs, which is also a convex combination of vertices of $\mathfrak{C}^\uparrow$ within~$F$.

For the other inclusion, observe that $\mathfrak{C}^\uparrow \cap F$ is the face of $\mathfrak{C}^\uparrow$ induced by valid inequality $\sum \textbf{x} \geq k$.
Let $\textbf{x}^*$ denote a point in that face.
Viewing the face as a polytope, $\textbf{x}^*$ may be written as a convex combination of vertices of the face, which in turn are vertices of $\mathfrak{C}^\uparrow$ satisfying $\sum \textbf{x} = k$.
We thus write $\textbf{x}^*$ as a convex combination of incidence vectors of kstabs, and $\textbf{x}^* \in \mathfrak{C}$.

The proof is analogous for the second equality, observing that $F$ is the face determined by inequality $\sum \textbf{x} \leq k$, valid for $\mathfrak{C}^\downarrow$.
\end{proof}

Note that it is not necessary that a vertex of the intersection of two polytopes is a vertex of any of the polytopes.
For a counterexample, consider two squares $A$, $B$ in~$\mathbb{Re}^2$ such that $A \cap B$ is another square;
vertices of the intersection need not be vertices of $A$ or $B$.
The result in Theorem~\ref{thm:poly:extreme_points_inter} below shows a rather favourable situation when it comes to our cardinality constrained stable set polytopes.
In order to prove it, we use the following fact, which is an elementary exercise in polyhedral theory \citep[Exercise 3-8]{2017GoemansLectureNotes}.
We remind the reader of the equivalence of extreme points, vertices, and basic feasible solutions of a polyhedron.

\begin{lemma}
Let
$\mathcal{P} = \left\lbrace \mathbf{x} \in \mathbb{Q}^n: \mathbf{A} \mathbf{x} \leq \mathbf{b}, \mathbf{C} \mathbf{x} \leq \mathbf{d} \right\rbrace$,
and
$\mathcal{Q} = \left\lbrace \mathbf{x} \in \mathbb{Q}^n: \mathbf{A} \mathbf{x} \leq \mathbf{b}, \mathbf{C} \mathbf{x} = \mathbf{d} \right\rbrace$.
It follows that $\ext \mathcal{Q} \subseteq \ext \mathcal{P}$.
\label{lemma:vertices_of_P_and_Q}
\end{lemma}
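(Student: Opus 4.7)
The plan is to prove the inclusion directly from the convex-combination characterization of extreme points. First I would note that the inclusion $\mathcal{Q} \subseteq \mathcal{P}$ is immediate, since any feasible point of $\mathcal{Q}$ satisfies $\mathbf{C}\mathbf{x} = \mathbf{d}$ and therefore $\mathbf{C}\mathbf{x} \leq \mathbf{d}$. So every extreme point of $\mathcal{Q}$ is at least a point of $\mathcal{P}$; the task is to show it cannot be written as a strict convex combination of two distinct points of $\mathcal{P}$.

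Let $\mathbf{v} \in \ext\mathcal{Q}$ and suppose, for contradiction, that $\mathbf{v} = \lambda \mathbf{u} + (1-\lambda)\mathbf{w}$ for some $\mathbf{u}, \mathbf{w} \in \mathcal{P}$ with $\mathbf{u} \neq \mathbf{w}$ and $\lambda \in (0,1)$. The key observation is then to examine this equation coordinate-wise under the linear map $\mathbf{C}$. We have $\mathbf{C}\mathbf{u} \leq \mathbf{d}$ and $\mathbf{C}\mathbf{w} \leq \mathbf{d}$, while
\[
\lambda (\mathbf{C}\mathbf{u}) + (1-\lambda)(\mathbf{C}\mathbf{w}) = \mathbf{C}\mathbf{v} = \mathbf{d}.
\]
Since a strict convex combination of two quantities bounded above by $\mathbf{d}$ can equal $\mathbf{d}$ only if both quantities already equal $\mathbf{d}$ componentwise, I would conclude that $\mathbf{C}\mathbf{u} = \mathbf{C}\mathbf{w} = \mathbf{d}$. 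Combined with $\mathbf{A}\mathbf{u} \leq \mathbf{b}$ and $\mathbf{A}\mathbf{w} \leq \mathbf{b}$, this places both $\mathbf{u}$ and $\mathbf{w}$ inside $\mathcal{Q}$.

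But then $\mathbf{v}$ is written as a strict convex combination of two distinct points of $\mathcal{Q}$, contradicting its extremality in $\mathcal{Q}$. So no such decomposition exists, and $\mathbf{v} \in \ext \mathcal{P}$. The argument is essentially elementary; I do not anticipate any real obstacle, beyond stating the convex-combination criterion cleanly and being explicit that the componentwise step of forcing $\mathbf{C}\mathbf{u} = \mathbf{C}\mathbf{w} = \mathbf{d}$ relies on $\lambda \in (0,1)$. An alternative route via basic feasible solutions would also work (active constraints at $\mathbf{v}$ in $\mathcal{Q}$ are also active in $\mathcal{P}$, so the rank condition persists), but the convex-combination argument is shorter and avoids any need to distinguish bases.
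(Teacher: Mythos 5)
Your proof is correct, and it takes a genuinely different route from the paper's. The paper argues via basic feasible solutions: an extreme point $\mathbf{x}^*$ of $\mathcal{Q}$ is the unique solution of the subsystem formed by the active rows of $\mathbf{A}\mathbf{x}\leq\mathbf{b}$ together with $\mathbf{C}\mathbf{x}=\mathbf{d}$; that same subsystem is a selection of inequalities of $\mathcal{P}$ satisfied with equality, so the $n$ linearly independent constraint vectors certify that $\mathbf{x}^*$ is a basic (and feasible, hence extreme) point of $\mathcal{P}$. You instead use the convex-combination characterization and the observation that a strict convex combination of vectors bounded above by $\mathbf{d}$ can equal $\mathbf{d}$ only if both summands already equal $\mathbf{d}$ componentwise, which pushes any proposed decomposition of $\mathbf{v}$ back into $\mathcal{Q}$ and contradicts extremality there. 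Both arguments are sound; yours is more elementary in that it never invokes the rank/basis machinery or the equivalence of vertices with basic feasible solutions (it would apply verbatim even without a finite inequality description of the ambient set), while the paper's version is the one a reader steeped in LP theory might find more immediate, since it reduces the claim to bookkeeping on active constraints. The one step you should make fully explicit in a final write-up is the componentwise argument you already flag: for each row $i$, if $(\mathbf{C}\mathbf{u})_i < d_i$ or $(\mathbf{C}\mathbf{w})_i < d_i$ then $\lambda(\mathbf{C}\mathbf{u})_i + (1-\lambda)(\mathbf{C}\mathbf{w})_i < d_i$ because $\lambda\in(0,1)$, contradicting $\mathbf{C}\mathbf{v}=\mathbf{d}$.
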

\begin{proof}
If $\mathbf{x^*} \in \ext \mathcal{Q}$, then $\mathbf{x^*}$ is a basic feasible solution of $\mathcal{Q}$. 
Let $I$ denote the subset of indices of constraints in $\mathbf{A} \mathbf{x} \leq \mathbf{b}$ that are active at $\mathbf{x^*}$, which is thus the unique solution of the subsystem
\begin{equation}
\left\{
\begin{array}{llll}
\mathbf{a}_i \mathbf{x} & = & b_i, &  \text{ for } i \in I, \\
\mathbf{C} \mathbf{x} & = & \mathbf{d}.
\end{array}
\right.
\label{eq:vertices_of_P_and_Q}
\end{equation}
This subsystem also corresponds to a selection of inequalities in the definition of $\mathcal{P}$ to be satisfied with equality. 
The same $n$ linearly independent constraint vectors in 
$\eqref{eq:vertices_of_P_and_Q}$
determine that $\mathbf{x^*}$ is a basic solution of $\mathcal{P}$.
Since $\mathbf{x^*} \in \mathcal{P}$ as well, it follows that $\mathbf{x^*} \in \ext \mathcal{P}$.
\end{proof}

\begin{theorem}
$\ext\mathfrak{C}(H,k)  =  \ext\mathfrak{C}^\uparrow(H,k)  \cap  \ext\mathfrak{C}^\downarrow(H,k)$
for arbitrary $H$ and $k$.
\label{thm:poly:extreme_points_inter}
\end{theorem}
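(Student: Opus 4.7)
The plan is to prove the two inclusions separately, leveraging Theorem~\ref{thm:poly:inter}(ii) and Lemma~\ref{lemma:vertices_of_P_and_Q} for the forward direction, and the fact that $\mathfrak{C}^\uparrow$ and $\mathfrak{C}^\downarrow$ are $\{0,1\}$-polytopes for the reverse direction.

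For the inclusion $\ext\mathfrak{C} \subseteq \ext\mathfrak{C}^\uparrow \cap \ext\mathfrak{C}^\downarrow$, I would first fix any outer description of $\mathfrak{C}^\uparrow(H,k)$ as $\{\mathbf{x} : \mathbf{Ax} \leq \mathbf{b}\}$, and append to it the valid inequality $-\sum \mathbf{x} \leq -k$ (valid because every vertex of $\mathfrak{C}^\uparrow$ is the incidence vector of a stable set with at least $k$ elements). Letting this single inequality play the role of the ``$\mathbf{Cx} \leq \mathbf{d}$'' block in Lemma~\ref{lemma:vertices_of_P_and_Q}, and tightening it to equality, produces exactly $\mathfrak{C}^\uparrow(H,k) \cap F$, which by Theorem~\ref{thm:poly:inter}(ii) equals $\mathfrak{C}(H,k)$. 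The lemma then yields $\ext\mathfrak{C}(H,k) \subseteq \ext\mathfrak{C}^\uparrow(H,k)$. A symmetric argument, using the valid inequality $\sum \mathbf{x} \leq k$ for $\mathfrak{C}^\downarrow(H,k)$, yields $\ext\mathfrak{C}(H,k) \subseteq \ext\mathfrak{C}^\downarrow(H,k)$, and intersecting gives the desired inclusion.

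For the reverse inclusion, I would take an arbitrary $\mathbf{x}^* \in \ext\mathfrak{C}^\uparrow(H,k) \cap \ext\mathfrak{C}^\downarrow(H,k)$. Because $\mathfrak{C}^\uparrow(H,k)$ is defined as the convex hull of a finite family of $\{0,1\}$-vectors, its extreme points are precisely those vectors; indeed, any $\{0,1\}$-point in $[0,1]^n$ is trivially extreme. Therefore $\mathbf{x}^* \in \mathcal{F}^\uparrow_{\text{kstab}}(H,k)$, so $\mathbf{x}^*$ is the incidence vector of a stable set with $\sum \mathbf{x}^* \geq k$. By the analogous argument in $\mathfrak{C}^\downarrow(H,k)$, we have $\sum \mathbf{x}^* \leq k$. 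Combining, $\sum \mathbf{x}^* = k$, so $\mathbf{x}^* \in \mathcal{F}_{\text{kstab}}(H,k) \subseteq \ext\mathfrak{C}(H,k)$.

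The main obstacle I anticipate is purely one of exposition: making the invocation of Lemma~\ref{lemma:vertices_of_P_and_Q} formally correct by explicitly appending a valid inequality to an inequality description of $\mathfrak{C}^\uparrow$ (respectively $\mathfrak{C}^\downarrow$) so that the lemma's hypothesis on the splitting of the system applies. This is bookkeeping rather than a mathematical difficulty, since appending valid inequalities does not change the polyhedron. Everything else follows from the previous theorem and the $\{0,1\}$-polytope observation.
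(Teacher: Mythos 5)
Your proposal is correct and follows essentially the same route as the paper: the reverse inclusion is handled by noting that a common vertex of the two $\{0,1\}$-polytopes must be an incidence vector of a stable set of cardinality exactly $k$, and the forward inclusion applies Lemma~\ref{lemma:vertices_of_P_and_Q} twice, identifying $\mathcal{Q}$ with $\mathfrak{C}$ via Theorem~\ref{thm:poly:inter}(ii). The extra bookkeeping you describe (appending the valid cardinality inequality to an outer description so the lemma's splitting hypothesis applies literally) is exactly what the paper leaves implicit.
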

\begin{proof}
Let $\textbf{x}^*$ denote a vertex of both $\mathfrak{C}^\uparrow$ and $\mathfrak{C}^\downarrow$. Then $\textbf{x}^*$ is the incidence vector of a kstab in $H$, and $\textbf{x}^* \in \ext\mathfrak{C}$.
For the other inclusion, we use Lemma~\ref{lemma:vertices_of_P_and_Q} twice:
once with $\mathcal{P}$ denoting a description of $\mathfrak{C}^\uparrow$ (whence $\mathcal{Q}$ is identified with $\mathfrak{C}$, by item (\textit{ii}) in Theorem~\ref{thm:poly:inter}) to show that 
$\ext\mathfrak{C}  \subseteq  \ext\mathfrak{C}^\uparrow$,
and again with $\mathcal{P} = \mathfrak{C}^\downarrow$ to show that
$\ext\mathfrak{C}  \subseteq  \ext\mathfrak{C}^\downarrow$.
\end{proof}

\begin{corollary}
Let $H$ be a graph on $n$ vertices, and $k$ be a positive integer.
Also let 
$
\mathcal{P} = \Big\lbrace \mathbf{x} \in \mathbb{Q}^{n}: {\mathbf{A} \mathbf{x} \leq \mathbf{b},} \sum_{u \in [n]} x_u \geq k \Big\rbrace
$
be a formulation for stable sets of cardinality at least $k$ in that graph, that is, 
${\mathcal{P} \cap \left\lbrace 0,1 \right\rbrace^n } = \mathcal{F}^{\uparrow}_{\text{kstab}}(H,k)$.
If $\mathcal{P}$ is actually integral ($\mathcal{P} = \mathfrak{C}^\uparrow$), then so is the formulation
$\mathcal{P}^\prime = \Big\lbrace \mathbf{x} \in \mathbb{Q}^{n}: 
{\mathbf{A} \mathbf{x} \leq \mathbf{b}}, 
{\sum_{u \in [n]} x_u = k} \Big\rbrace = \mathfrak{C}(H,k)$.
The analogous result holds for $\mathfrak{C}^\downarrow(H,k)$.
\end{corollary}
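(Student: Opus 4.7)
The plan is to deduce this corollary directly from part (\textit{ii}) of Theorem~\ref{thm:poly:inter}, with essentially no extra polyhedral machinery needed. First, I would rewrite $\mathcal{P}'$ as $\mathcal{P} \cap F$, where $F = \left\lbrace x \in \mathbb{Q}^{n} : \sum_{u \in [n]} x_u = k \right\rbrace$ is the hyperplane already used in the theorem. The point is simply that imposing the equation $\sum_u x_u = k$ on top of $\sum_u x_u \geq k$ subsumes the latter, so the two systems $\left\lbrace \mathbf{A} \mathbf{x} \leq \mathbf{b},\, \sum_u x_u \geq k,\, \sum_u x_u = k \right\rbrace$ and $\left\lbrace \mathbf{A} \mathbf{x} \leq \mathbf{b},\, \sum_u x_u = k \right\rbrace$ describe the same polytope.

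Then I would use the hypothesis $\mathcal{P} = \mathfrak{C}^\uparrow(H,k)$ to chain
\[
\mathcal{P}' \;=\; \mathcal{P} \cap F \;=\; \mathfrak{C}^\uparrow(H,k) \cap F,
\]
and invoke the first identity in Theorem~\ref{thm:poly:inter}(\textit{ii}), $\mathfrak{C}(H,k) = \mathfrak{C}^\uparrow(H,k) \cap F$, to conclude that $\mathcal{P}' = \mathfrak{C}(H,k)$; in particular $\mathcal{P}'$ is integral. For the analogous $\mathfrak{C}^\downarrow$ statement, the only change is to replace $\sum_u x_u \geq k$ by $\sum_u x_u \leq k$ in the definition of $\mathcal{P}$ (the intersection with $F$ argument is identical, since equality again subsumes the inequality), to substitute $\mathfrak{C}^\uparrow$ by $\mathfrak{C}^\downarrow$, and to appeal to the second identity $\mathfrak{C}(H,k) = \mathfrak{C}^\downarrow(H,k) \cap F$ from the same theorem.

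There is no real obstacle here: the corollary is a one-line consequence of Theorem~\ref{thm:poly:inter}(\textit{ii}) once one notices that $\mathcal{P}'$ is the face of $\mathcal{P}$ cut out by $F$. Notably, neither Lemma~\ref{lemma:vertices_of_P_and_Q} nor Theorem~\ref{thm:poly:extreme_points_inter} need to be invoked, and the integrality of $\mathcal{P}$ is used only to identify it with the convex hull $\mathfrak{C}^\uparrow(H,k)$ on which the intersection identity is already phrased.
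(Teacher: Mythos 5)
Your argument is correct and is precisely the intended derivation: the paper states this corollary without an explicit proof because, exactly as you observe, $\mathcal{P}' = \mathcal{P} \cap F$ (the equality constraint subsumes the inequality) and the hypothesis $\mathcal{P} = \mathfrak{C}^\uparrow(H,k)$ reduces everything to the identity $\mathfrak{C}(H,k) = \mathfrak{C}^\uparrow(H,k) \cap F$ from Theorem~\ref{thm:poly:inter}(\textit{ii}), with the $\mathfrak{C}^\downarrow$ case handled symmetrically. Your remark that neither Lemma~\ref{lemma:vertices_of_P_and_Q} nor Theorem~\ref{thm:poly:extreme_points_inter} is needed is also accurate.
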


These results might be explored in future work that benefit from optimizing over kstabs with a reformulation based on stable sets of \textit{bounded} cardinality.
They may also be useful when dealing with classes of graphs for which an explicit characterization of the corresponding polytopes $\mathfrak{C}^\uparrow$ or $\mathfrak{C}^\downarrow$ is known.

Finally, we give a lower bound on the dimension of the polytope $\mathfrak{C}(H,k)$ as a function of the stability number $\alpha(H)$, that is, the size of the largest stable set in $H$.
\begin{theorem}
Let $k$ be a positive integer, and $H$ be an arbitrary graph on $n$ vertices such that $\alpha(H) \geq k+1$.
Then
$\alpha(H) - 1  \leq  \dim \mathfrak{C}(H,k)  \leq  n-1$.
\label{thm:poly:dim_at_least_alpha}
\end{theorem}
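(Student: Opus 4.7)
The upper bound is immediate: every incidence vector of a kstab sums to $k$ across its coordinates, so $\mathfrak{C}(H,k)$ lies in the hyperplane $\sum_{u \in [n]} x_u = k$, and hence has dimension at most $n-1$.

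For the lower bound $\dim \mathfrak{C}(H,k) \geq \alpha(H)-1$, my plan is to exhibit $\alpha(H)$ affinely independent incidence vectors of kstabs. Pick a maximum stable set $S^* = \{v_1, \ldots, v_{\alpha(H)}\}$ of $H$. Because every subset of a stable set is itself stable, every $k$-subset of $S^*$ is automatically a kstab in $H$, so the whole construction can be carried out inside $S^*$ and the edges of $H$ can be ignored from this point on.

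Fix the reference kstab $T_0 \defeq \{v_1, \ldots, v_k\}$ and form $\alpha(H) - 1$ further kstabs by single-element swaps within $S^*$: for each $i \in \{1, \ldots, k\}$ let $T_i \defeq (T_0 \setminus \{v_i\}) \cup \{v_{k+1}\}$, and for each $j \in \{2, \ldots, \alpha(H) - k\}$ let $T'_j \defeq (T_0 \setminus \{v_1\}) \cup \{v_{k+j}\}$. The hypothesis $\alpha(H) \geq k+1$ ensures that $v_{k+1}$ exists; the second family is empty precisely when $\alpha(H) = k+1$, a boundary case which the argument still handles uniformly. All told, this yields $1 + k + (\alpha(H) - k - 1) = \alpha(H)$ kstabs.

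The final step is to verify that the $\alpha(H) - 1$ difference vectors $\chi^{T_i} - \chi^{T_0}$ and $\chi^{T'_j} - \chi^{T_0}$ are linearly independent in $\mathbb{Q}^n$, which is equivalent to affine independence of the chosen incidence vectors. This reduces to coordinate-by-coordinate bookkeeping on a hypothetical vanishing combination $\sum_{i=1}^{k} a_i(\chi^{T_i} - \chi^{T_0}) + \sum_{j=2}^{\alpha(H)-k} b_j(\chi^{T'_j} - \chi^{T_0}) = \mathbf{0}$: inspecting coordinate $v_i$ for each $i \in \{2, \ldots, k\}$ forces $a_i = 0$ (only $\chi^{T_i} - \chi^{T_0}$ contributes there), coordinate $v_{k+1}$ then forces $a_1 = 0$, and each coordinate $v_{k+j}$ forces the matching $b_j = 0$. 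I do not anticipate a genuine obstacle; the only subtle point is the degenerate regime $\alpha(H) = k+1$, which the construction accommodates without modification.
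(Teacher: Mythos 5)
Your proof is correct, and it takes a genuinely different route from the paper's. The paper establishes the lower bound by induction on $\alpha(H)$: the base case $\alpha(H)=k+1$ produces the $k+1$ linearly independent points $\chi - \mathfrak{e}^i$ from a single stable set of size $k+1$, and the inductive step passes to the subgraph induced by one maximum stable set, invokes the hypothesis there, lifts the resulting incidence vectors back to $\mathbb{Q}^n$ by zero-padding, and adjoins one further kstab supported on a vertex outside that subgraph. You instead observe that every $k$-subset of a stable set is itself a kstab, so the entire construction can be carried out inside one fixed maximum stable set $S^*$ with the edges of $H$ playing no further role; the star-shaped family of single-element swaps around $T_0$ then yields $\alpha(H)$ affinely independent points in one stroke, and your coordinate-by-coordinate check of the difference vectors $\mathfrak{e}^{v_{k+1}}-\mathfrak{e}^{v_i}$ and $\mathfrak{e}^{v_{k+j}}-\mathfrak{e}^{v_1}$ is complete and handles the boundary case $\alpha(H)=k+1$ (empty second family) uniformly, as you note. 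What your approach buys is brevity and transparency: no induction, no lifting, no need to select a pair $\phi,\psi$ of independent pstab vectors. What the paper's approach buys is marginal: it delivers \emph{linearly} independent incidence vectors rather than merely affinely independent ones, but since $p$ affinely independent points already span an affine subspace of dimension $p-1$, this extra strength is not needed for the stated bound. In short, your argument is a clean simplification of the paper's.
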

\begin{proof}
The upper bound is trivial, given the presence of the cardinality constraint in the equality system of any linear inequality description of $\mathfrak{C}(H,k)$.
For the lower bound, we prove by induction on $\alpha(H)$ that we can find $\alpha(H)$ linearly independent (l.i.) incidence vectors of kstabs in $H$. The result then follows immediately.

Suppose first that $\alpha(H) = k+1$, and let $\chi \in \mathfrak{C}$ be the incidence vector of a stable set of cardinality $k+1$ in $H$.
Let $I \subset [n]$, $|I| = k+1$, denote the coordinates corresponding to vertices in that stable set, that is, $\chi_i = 1$ for each $i \in I$.
Denoting the $i$-th unit vector in $\mathbb{Re}^n$ by $\mathfrak{e}^i$, we have that
$\left\lbrace  \chi - \mathfrak{e}^i  \right\rbrace_{i \in I}$ 
are $k+1$ l.i. points in $\mathfrak{C}(H,k)$.

Assume inductively that we can determine $p$ l.i. incidence vectors of kstabs in a graph if its stability number is equal to $p$.
Now, given $H$ such that $\alpha(H) = p+1$, and $\chi$ the incidence vector of a maximum stable set in $H$, we may proceed as above to again determine $p+1$ l.i.  incidence vectors of \textit{pstabs} (cardinality $p$ stable sets) in $H$.
Let $\phi, \psi$ be two such vectors.

As the subgraph induced by $\phi$ has no edges, we have $\alpha(H[\phi]) = p$.
The inductive hypothesis thus yields a collection 
$\left\lbrace  \chi^1, \ldots, \chi^p  \right\rbrace \subset  \left\lbrace 0,1 \right\rbrace^{p}$ 
of l.i. incidence vectors of kstabs in the induced subgraph.
Let $\left\lbrace  \overline{\chi}^1, \ldots, \overline{\chi}^p  \right\rbrace$ be the lifting of this collection to space $\mathbb{Re}^n$ with zeros in the coordinates corresponding to missing vertices.

Since $\phi$ and $\psi$ are l.i., we claim that it is possible to discard $p-k$ vertices from the stable set induced by $\psi$ in such a way that the incidence vector $\overline{\psi}$ of the resulting kstab is l.i. of $\left\lbrace  \overline{\chi}^1, \ldots, \overline{\chi}^p  \right\rbrace$.
Indeed, $\phi$ and $\psi$ induce \textit{different} pstabs, so that there exists a vertex in the subgraph induced by $\psi$ that is not in the subgraph induced by $\phi$.
Let $u \in [n]$ be such that $\psi_u = 1$, $\phi_u = 0$, and choose $\overline{\psi}$ (kstab inducing) with $\overline{\psi}_u = 1$.
In turn, note that 
$\overline{\chi}^j_u = 0$ for each $j \in [p]$, by construction:
from $\phi_u = 0$ it follows that $u$ is one of the coordinates padded with zero when mapping $\chi^j$ to $\overline{\chi}^j$.
This means that $\overline{\psi} \not\in \Span \left\lbrace \overline{\chi}^1, \dots, \overline{\chi}^p \right\rbrace$, and hence we determine $p+1$ l.i. incidence vectors of kstabs in $H$, completing the proof.
\end{proof}

We remark that the down-monotone polytope $\mathfrak{C}^\downarrow(H,k)$ is full-dimensional for arbitrary $H$ and $k$, as it contains the $|V(H)|+1$ affinely independent points corresponding to the unit vectors and zero.
The problem of determining $\dim \mathfrak{C}$ may therefore be cast in terms of $\mathfrak{C}^\uparrow$ in future research.

\section{Lagrangean relaxation and decomposition}
\label{sec:lagrangean}

In this section, we present the main contributions of the paper.
We give special attention to justifying carefully the drawbacks of previous reformulations based on Lagrangean duality, and how a decomposition approach optimizing over the fixed cardinality stable set polytope leads to an \textit{effective} algorithm to compute strong dual bounds for optimal stable trees.

In this section, \textit{effectiveness} is taken from the analytical point of view: we argue that the decomposition is superior \textit{in theory} both with respect to bound quality and tractability of the dual problem.
In the next section, we discuss the practical evaluation of our (free, open-source) software implementing the resulting algorithm, and argue that it indeed contributes as an effective tool to determine tight dual bounds on a representative subset of benchmark instances of the problem.

\subsection{Drawbacks of existing Lagrangean approaches for MSTCC}
\label{sec:previous}

The work of \cite{Zhang2011} 
contributes in many research directions about stable spanning trees, including particular cases which are polynomially solvable, feasibility tests, several heuristics, and two exact algorithms based on Lagrangean relaxation.
The first formulation is straightforward, dualizing all conflict constraints~$\eqref{eq:mstcc_conflict}$; they denote the corresponding dual bound $L^*$. 
The second approach relaxes a subset of inequalities~$\eqref{eq:mstcc_conflict}$: using an approximation to the maximum edge clique partitioning problem \citep{dessmarkECP2007}, this scheme dualizes a subset of conflict constraints such that the remaining conflict graph is a collection of disjoint cliques; the resulting dual bound is denoted $\ell^*$.
The authors argue that the latter reformulation is stronger than the former, and present extensive computational results justifying their claims.

Unfortunately, the Lagrangean dual bounds $L^*$ and $\ell^*$ in \cite{Zhang2011} are in fact identical, as we show next.
The first relaxation clearly has the integrality property, as the remaining constraints correspond to a description of the spanning tree polytope or, equivalently, to bases of the graphic matroid of $G$.
The second relaxation scheme is designed so that
the conflict constraints which remain in the subproblem of relaxation $\ell^{\ast}$ induce a collection of disjoint cliques in $H$.
The subproblem thus corresponds to the intersection of two matroids: the graphic matroid of $G$ and the partition matroid of subsets of $E$ that intersect the enumerated cliques in $H$ at most once.
It follows that the second relaxation also has the integrality property \cite[Theorem III.3.5.9]{NemhauserWolsey1999},
and consequently, $L^{\ast}$ and $\ell^{\ast}$ both equal the optimal objective function value in the continuous relaxation of $\eqref{eq:mstcc_obj}-\eqref{eq:mstcc_binary}$ \citep[Corollary II.3.6.6]{NemhauserWolsey1999}.
In this perspective, the computational results in Tables~2--4 of \cite{Zhang2011} diverge from what Lagrangean duality theory prescribes.

Recently, \cite{CarrabsNetworks2021} presented thorough computational experiments of a new Lagrangean algorithm for the MSTCC problem.
They use the same relaxation scheme dualizing all conflict constraints, and focus on a combination of dual ascent and the subgradient method to
compute the Lagrangean bound, namely, $L^*$ in \cite{Zhang2011}, equal to the LP-relaxation of $\eqref{eq:mstcc_obj}-\eqref{eq:mstcc_binary}$.
In Table~1 of \cite{CarrabsNetworks2021}, the performance of the new algorithm is compared to the results published in \cite{Zhang2011}.
That is, the issue we analyse above regarding the computational results of \cite{Zhang2011} is repeated as a baseline of the new numerical evaluation.

Another drawback of the new algorithm is that dual ascent steps are intertwined with subgradient optimization. 
While \textit{not} incorrect, this choice undermines the advantages of a strategy to solve the dual problem in fewer iterations.
A passage from a classical work of Guignard and Rosenwein  \citep{GuignardRosenwein1989} is conclusive:
\textit{
``An ascent procedure may also serve to initialize multipliers in a subgradient procedure.
This scheme is particularly useful at the root node of an enumeration tree. 
However, an ascent method cannot guarantee improved bounds over bounds obtained by solving the Lagrangean dual with a subgradient procedure.''
}

Moreover, the ascent steps rely on a greedy heuristic, and not on \textit{maximal ascent directions}, \textit{i.e.} optimal step size in a direction of bound increase; see Definition~\ref{def:maximal-ascent}.
In the algorithm of \cite{CarrabsNetworks2021}, if a conflicting pair of edges exists in a Lagrangean solution, the multiplier adjustment is derived from the observation that the dual bound shall improve by at least the increased cost of replacing one of the edges by its cheapest successor (in a list of edges ordered by current costs).
The authors remedy the resulting low adjustment values by alternating subgradient optimization iterations and the ascent procedure.

We stress again that references \cite{CarrabsNetworks2021} and \cite{Zhang2011} have many virtues and present concrete contributions to the MSTCC literature.
Our only remark is that the first Lagrangean strategy designed to improve upon the LP-relaxation bound is matter-of-factly yet to be introduced.
In the next sections, we offer an interesting approach to tackle this challenge.

\subsection{Lagrangean decomposition}
\label{sec:decomposition}

Renaming the variables in $\eqref{eq:mstcc_conflict}$ as $\mathbf{y}$, and introducing linking constraints $x_{e} = y_{e}$ for each $e \in E$, we have the same formulation. Now, dualizing the linking constraints with Lagrangean multipliers $\lambda \in \mathbb{Q}^{|E|}$, we arrive at the Lagrangean decomposition (LD) formulation:
\begin{equation}
z(\lambda) \hspace{0.1cm} \defeq \hspace{0.1cm}  
\min_{ \mathbf{x} \in \mathcal{F}_{\text{sp.tree}}(G) } 
{\left(\mathbf{w} - \lambda \right)}^\intercal 
\mathbf{x} 
\hspace{0.2cm}
 +   
\hspace{0.1cm}
\min_{ \mathbf{y} \in \mathcal{F}_{\text{kstab}}(H,|V|-1) } 
\lambda^\intercal \mathbf{y} \label{eq:ld_ip_lambda}
\end{equation}
where $\mathcal{F}_{\text{sp.tree}}(G)$ is given by
\vspace{-0.01cm}
\begin{alignat}{2}
& \sum_{e \in E(S)} x_{e} \leq |S|-1,  \hspace{0.2cm}  && \text{ for each } S \subsetneq V, S \neq \emptyset , \label{eq:ld:mst1}\\
             & \sum_{e \in E} x_{e} = |V|-1, \label{eq:ld:mst2} &&  \\
& x_{e} \in \left\lbrace 0,1 \right\rbrace, \hspace{1cm}  &&  \text{ for each } e \in E , \label{eq:ld:mst3}
\end{alignat}
and $\mathcal{F}_{\text{kstab}}(H,|V|-1)$ is as in Section~\ref{sec:polyhedral}, given by
\begin{alignat}{2}
\sum_{e \in E} y_{e} = |V|-1 ,  &&  \label{eq:ld:kstab1} \\
y_{e_i} + y_{e_j} \leq 1,  & \hspace{0.3cm} \text{ for each } \left\lbrace e_i , e_j \right\rbrace \in C , \label{eq:ld:kstab2} \\
y_{e} \in \left\lbrace 0,1 \right\rbrace,  &  \hspace{0.3cm}  \text{ for each } e \in E \label{eq:ld:kstab3}.
\end{alignat}
The Lagrangean dual problem is to determine the tightest such bound:
\begin{equation}
\zeta\hspace{0.1cm} \defeq \hspace{0.1cm} 
\max_{\lambda \in \mathbb{Q}^{|E|}} \left\lbrace z(\lambda) \right\rbrace \label{eq:ld_dual} .
\end{equation}

\cite{GuignardKim1987} presented the first systematic study of LD as a general purpose reformulation technique.
They indicate earlier applications of variable splitting/layering, especially \cite{Shepardson1980} and \cite{Celso1986}.
See also the outstanding presentation in \cite[Section~7]{Guignard2003}.

One of the main virtues of the decomposition principle over traditional Lagrangean relaxation schemes is that the bound from the LD dual is equal to the optimum of the primal objective function over the intersection of the convex hulls of both constraint sets \cite[Corollary 3.4]{GuignardKim1987}.
The decomposition bound is thus equal to the strongest of the two Lagrangean relaxation schemes corresponding to dualizing either of the constraint sets.

In our application to the MSTCC problem, we recognize the integrality of the spanning tree formulation described by ${\eqref{eq:ld:mst1}-\eqref{eq:ld:mst2}}$ over $\mathbf{x} \in \mathbb{Q}^{|E|}$. 
Hence the decomposition bound matches that of the stronger scheme where constraints $\eqref{eq:ld:kstab1}-\eqref{eq:ld:kstab2}$ enforcing fixed cardinality stable sets are kept in the subproblem (which is thus \textit{convexified}), and all subtour elimination constraints $\eqref{eq:ld:mst1}$ are dualized.
This means that we can compute stronger Lagrangean bounds, while limiting the number of multipliers in the dual problem to $|E|$, instead of dealing with exponentially many multipliers \textit{e.g.} in a relax-and-cut approach.

We defend the advantages of breaking the original problem into two parts, exploiting their rich combinatorial and polyhedral structures, so as to derive stronger dual bounds.
The price of this strategy is to solve a strongly \textsf{NP}-hard subproblem, which naturally leads to the design of more sophisticated dual algorithms, requiring the fewest iterations possible.

\subsection{Dual algorithm}
\label{sec:dual-algorithm}

We combine two techniques to solve the problem of approximating $\zeta$
in the dual problem~$\eqref{eq:ld_dual}$.
The first is customized dual ascent, an \textit{ad-hoc}, analytical method that integrates naturally with LD \citep{GuignardKim1987}.
It guarantees monotone bound improvement, and could be employed as a stand-alone dual algorithm -- though likely converging to a sub-optimal bound $z(\lambda^*) < \zeta$ due to incomplete information of ascent directions.
We circumvent this by continuing the search (from the dual ascent solution $\lambda^*$) with an iterative, subgradient-based method: the Volume Algorithm of \cite{barahonaVA2000}.

Proposed as an extension of subgradient optimization to attain better numerical results, the Volume Algorithm was later characterized by \cite{bahienseRVA2002} as an intermediate method between classical subgradient and more robust bundle methods, using combinations of past and present subgradient vectors available at each iteration.

\begin{remark}
Like many other subgradient-like methods, the Volume Algorithm also determines primal sequences of (fractional) points approximating the dual optimal solution. 
We do not explore this aspect in the present work.
See our suggestions for further research in the discussion following our numerical results in Section~\ref{sec:xp:numerical}.
\end{remark}

Since the Volume Algorithm is precisely defined, and we use it as a black-box solver, the remainder of this section is devoted to its initialization by the dual ascent procedure.
In what follows, let
$\mathfrak{e}_i \in \mathbb{Re}^{m}$ denote the standard unit vector in the $i$-th direction.
We let 
$\mathcal{P}_{\text{sp.tree}}(G) \defeq \conv \mathcal{F}_{\text{sp.tree}}(G)$ denote the spanning tree polytope of graph $G$.
Note that $\mathcal{P}_{\text{sp.tree}}$ and $\mathfrak{C}$ are bounded (polytopes contained in the 0,1 hypercube), and do not contain extreme rays. 

The Lagrangean dual function $z: \mathbb{Q}^{|E|} \rightarrow \mathbb{Q}$ is an implicit function of $\lambda$.
It is determined by the lower envelope of 
$
\Big\lbrace (\mathbf{w}-\lambda)^\intercal \mathbf{x}^r + \lambda^\intercal \mathbf{y}^s : \mathbf{x}^r \in \ext \mathcal{P}_{\text{sp.tree}}(G), \mathbf{y}^s \in \ext\mathfrak{C}(H, |V|-1) \Big\rbrace .
$
Hence, it is piecewise linear concave, and differentiable almost everywhere, with breakpoints at all $\lambda^\prime$ where the optimal solution to $z(\lambda^\prime)$ is not unique.

Such breakpoints are the key ingredient in the dual ascent paradigm to solve a Lagrangean dual problem.
In particular, the following kind of point deserves special attention to guide progress in this framework.

\begin{definition}
A \textbf{maximal ascent direction} of the Lagrangean dual function $z: \mathbb{Q}^m \rightarrow \mathbb{Q}$ at $\lambda^r$ is a vector $\mathbf{u} \in \mathbb{Q}^m$ satisfying two conditions: 
(i) $\mathbf{u}$ determines a direction of increase from $z(\lambda^r)$, \textit{i.e.} 
$z(\lambda^r + \mathbf{u}) > z(\lambda^r)$;
(ii) $\lambda^r + \mathbf{u}$ is a breakpoint of $z$, that is,
if $(\mathbf{x}^r, \mathbf{y}^r)$ is an optimal solution to $z(\lambda^r)$,
then $(\mathbf{x}^r, \mathbf{y}^r)$ also optimizes $z(\lambda^r + \mathbf{u})$, but it is not the unique solution.
\label{def:maximal-ascent}
\end{definition}

A maximal ascent direction determines an optimal multiplier adjustment in a given direction of increase of the Lagrangean dual function.
It need not correspond to a \textit{steepest} ascent direction from $z(\lambda^r)$, in general.

The technique of optimizing the Lagrangean dual function by means of ascent directions uses the formulation structure to determine monotone bound improving sequences of multipliers.
It was pioneered by \cite{BildeKrarup1977} and \cite{Erlekotter1978} in the context of the facility location problem.
An actual algorithm of this kind thus relies on analysing the specific problem and the information available from subproblem solutions. 
Although there is no pragmatic, problem-independent algorithm, we found it instructive to summarize and systematically review the following instructions in the derivation of our results.
\begin{remark}
[\textit{Guiding principle of LD based dual ascent}] 
We may derive a maximal ascent direction by
analysing the implications of updating a single multiplier $\lambda_e$, corresponding to a violation $x_e \neq y_e$.
The update must improve the Lagrangean dual bound and induce an alternative optimal solution.
\end{remark}

To avoid overloading the notation in the next two results, we omit the transposition symbol in vector products like $\left(\mathbf{w} - \lambda^r \right)^\intercal \mathbf{x}^r$.

\begin{theorem}
Let $e \in E$ and let $(\mathbf{x}^r, \mathbf{y}^r)$ be an optimal solution to subproblem $z(\lambda^r)$, such that $x^r_e = 0 < 1 = y^r_e$.
Define the non-negative quantities
\begin{alignat}{2}
\Delta^r_{-e} & \defeq \min \left\lbrace \lambda^r \mathbf{y} : \mathbf{y} \in \mathcal{F}_{\text{kstab}}(H, |V|-1) , y_e = 0 \right\rbrace - \lambda^r \mathbf{y}^r , \label{eq:ascent1:delta}\\
\partial^r_{+e} & \defeq \min \left\lbrace (\mathbf{w} - \lambda^r)\mathbf{x} : \mathbf{x} \in \mathcal{F}_{\text{sp.tree}}(G), x_e = 1 \right\rbrace - \left(\mathbf{w} - \lambda^r \right) \mathbf{x}^r .
\label{eq:ascent1:del}
\end{alignat}
If $\min \left\lbrace \Delta^r_{-e} , \partial^r_{+e} \right\rbrace \neq 0$, then 
$\min \left\lbrace \Delta^r_{-e} , \partial^r_{+e} \right\rbrace \cdot \mathfrak{e}_e$ is a maximal ascent direction of $z$ at $\lambda^r$.
\label{thm:ascent:x0y1}
\end{theorem}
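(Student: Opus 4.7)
The plan is to verify the two conditions in Definition~\ref{def:maximal-ascent} for the candidate direction $\mathbf{u} \defeq \delta \mathfrak{e}_e$, where $\delta \defeq \min\{\Delta^r_{-e}, \partial^r_{+e}\} > 0$, and I would write $\lambda' \defeq \lambda^r + \mathbf{u}$ throughout. The core observation guiding the argument is that perturbing only the single multiplier $\lambda_e$ shifts the cost that each subproblem assigns to any feasible point by an amount that depends only on its $e$-th coordinate. Hence the two minimizations can be analysed independently by splitting each feasible set according to whether its $e$-th coordinate equals $0$ or $1$.

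First, I would evaluate the perturbed objectives at the current optimum: since $x^r_e=0$ and $y^r_e=1$,
\[
(\mathbf{w}-\lambda')\mathbf{x}^r + \lambda'\mathbf{y}^r = (\mathbf{w}-\lambda^r)\mathbf{x}^r + \lambda^r\mathbf{y}^r - \delta x^r_e + \delta y^r_e = z(\lambda^r) + \delta.
\]
Thus, once I show that $(\mathbf{x}^r,\mathbf{y}^r)$ still attains the minimum of both subproblems at $\lambda'$, it follows immediately that $z(\lambda')=z(\lambda^r)+\delta > z(\lambda^r)$, which settles condition~(\textit{i}) and the ``still optimal'' half of condition~(\textit{ii}).

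Second, for the kstab subproblem I would partition $\mathcal{F}_{\text{kstab}}(H, |V|-1)$ by the value of $y_e$. Every $\mathbf{y}$ with $y_e=1$ has its $\lambda$-cost increased by exactly the same $\delta$ as $\mathbf{y}^r$ does, so no such point can overtake $\mathbf{y}^r$; every $\mathbf{y}$ with $y_e=0$ has unchanged cost, and by the definition~$\eqref{eq:ascent1:delta}$ the cheapest such kstab costs $\lambda^r\mathbf{y}^r + \Delta^r_{-e} \geq \lambda^r\mathbf{y}^r + \delta = \lambda'\mathbf{y}^r$. A symmetric partition of $\mathcal{F}_{\text{sp.tree}}(G)$ according to whether $x_e=0$ or $x_e=1$, combined with $\partial^r_{+e} \geq \delta$ and the definition~$\eqref{eq:ascent1:del}$, shows that $\mathbf{x}^r$ likewise remains optimal for the spanning tree subproblem at $\lambda'$.

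Third, to complete condition~(\textit{ii}) I would exhibit a second optimal solution at $\lambda'$ by splitting into the two cases defining $\delta$. If $\delta = \Delta^r_{-e}$, any minimizer $\mathbf{y}^\star$ in $\eqref{eq:ascent1:delta}$ satisfies $y^\star_e = 0$ and $\lambda'\mathbf{y}^\star = \lambda^r\mathbf{y}^\star = \lambda^r\mathbf{y}^r + \delta = \lambda'\mathbf{y}^r$, so $(\mathbf{x}^r,\mathbf{y}^\star) \neq (\mathbf{x}^r,\mathbf{y}^r)$ is a distinct optimizer of $z(\lambda')$; if instead $\delta = \partial^r_{+e}$, a minimizer $\mathbf{x}^\star$ in $\eqref{eq:ascent1:del}$ plays the analogous role on the spanning tree side. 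The main obstacle is simply to keep the sign conventions straight and to confirm that the one-coordinate perturbation genuinely decouples the two subproblems along $\mathfrak{e}_e$; beyond that, the statement reduces to the direct algebraic verification above.
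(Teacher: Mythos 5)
Your argument is correct and is the standard dual-ascent verification that the cited reference carries out: you check both conditions of Definition~\ref{def:maximal-ascent} by observing that perturbing only $\lambda_e$ shifts each subproblem's costs according to the $e$-th coordinate, that $\Delta^r_{-e}, \partial^r_{+e} \geq \delta$ keeps $(\mathbf{x}^r,\mathbf{y}^r)$ optimal while raising $z$ by exactly $\delta$, and that whichever quantity attains the minimum supplies the alternative optimal solution certifying the breakpoint. The paper itself defers this proof to \cite[Theorem 4.2]{samerINOC2022}, and your reconstruction matches that approach; no gaps.
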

\begin{proof}
See \cite[Theorem 4.2]{samerINOC2022}.
\end{proof}

We remark that determining a minimum spanning tree with edge $e=\left\lbrace i,j \right\rbrace$ fixed \textit{a priori} in~$\eqref{eq:ascent1:del}$ 
can be accomplished efficiently by
\textit{contracting} that edge in $G$. 
If the contraction operator is defined so as to allow parallel edges between the new vertex $ij$ and $k \in N(i)\cap N(j)$, where $N(u)\subset V$ denotes the neighbourhood of vertex $u$, we must ensure that not more than one edge between two vertices is chosen (\textit{e.g.} in Kruskal's algorithm; this is not an issue in Prim's method).
Now, if the contraction operator forbids parallel edges, we make an unambiguous choice in the original graph $G$ by recognizing the proper edge ($\left\lbrace i,k \right\rbrace$ or $\left\lbrace j,k \right\rbrace$) yielding the correct spanning tree.

The next result is analogous, now identifying maximal ascent directions 
from Lagrangean solutions where $x^r_e = 1$ but $y^r_e = 0$.

\begin{theorem}
Let $e \in E$ and let $(\mathbf{x}^r, \mathbf{y}^r)$ be an optimal solution to subproblem $z(\lambda^r)$, such that $x^r_e = 1 > 0 = y^r_e$.
Define the non-negative quantities
\begin{alignat}{2}
\Delta^r_{+e} & \defeq \min \left\lbrace \lambda^r \mathbf{y} : \mathbf{y} \in \mathcal{F}_{\text{kstab}}(H, |V|-1) , y_e = 1 \right\rbrace - \lambda^r \mathbf{y}^r , \label{eq:ascent2:delta}\\
\partial^r_{-e} & \defeq \min \left\lbrace (\mathbf{w} - \lambda^r)\mathbf{x} : \mathbf{x} \in \mathcal{F}_{\text{sp.tree}}(G), x_e = 0 \right\rbrace - \left(\mathbf{w} - \lambda^r \right) \mathbf{x}^r .
\label{eq:ascent2:del}
\end{alignat}
If $\min \left\lbrace \Delta^r_{+e} , \partial^r_{-e} \right\rbrace \neq 0$, then 
$\min \left\lbrace \Delta^r_{+e} , \partial^r_{-e} \right\rbrace \cdot \left(- \mathfrak{e}_e \right)$ is a maximal ascent direction of $z$ at $\lambda^r$.
\label{thm:ascent:x1y0}
\end{theorem}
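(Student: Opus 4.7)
The plan is to adapt, \emph{mutatis mutandis}, the argument used to establish Theorem~\ref{thm:ascent:x0y1} (whose full proof is attributed to \cite{samerINOC2022}). The only substantive differences are a sign flip in the ascent direction, caused by the reversal $x^r_e=1, y^r_e=0$, and the corresponding swap of the role of the ``$y_e=0$'' and ``$y_e=1$'' (resp.\ ``$x_e=1$'' and ``$x_e=0$'') restricted subproblems. I would verify, by direct computation, that the candidate step $\delta^{\ast}\defeq\min\{\Delta^r_{+e},\partial^r_{-e}\}$ along $-\mathfrak{e}_e$ satisfies both requirements of Definition~\ref{def:maximal-ascent}.

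Concretely, I would fix $\delta>0$, set $\mathbf{u}=-\delta\,\mathfrak{e}_e$, and decompose $z(\lambda^r+\mathbf{u})$ into its two minimizations. For the spanning tree term, the objective becomes $(\mathbf{w}-\lambda^r)\mathbf{x}+\delta\, x_e$; partitioning the feasible set according to $x_e=1$ versus $x_e=0$ and exploiting that $\mathbf{x}^r$ (with $x^r_e=1$) is optimal at $\lambda^r$, one obtains
\[
\min_{\mathbf{x}\in\mathcal{F}_{\text{sp.tree}}(G)}(\mathbf{w}-\lambda^r-\mathbf{u})\mathbf{x} \;=\; (\mathbf{w}-\lambda^r)\mathbf{x}^r + \min\{\delta,\,\partial^r_{-e}\}.
\]
For the kstab term, the objective becomes $\lambda^r\mathbf{y}-\delta\, y_e$; partitioning according to $y_e=0$ versus $y_e=1$ and using $y^r_e=0$ with $\mathbf{y}^r$ optimal at $\lambda^r$, one obtains
\[
\min_{\mathbf{y}\in\mathcal{F}_{\text{kstab}}(H,|V|-1)}(\lambda^r+\mathbf{u})\mathbf{y} \;=\; \lambda^r\mathbf{y}^r - \max\{0,\,\delta-\Delta^r_{+e}\}.
\]
Summing yields the piecewise-linear expression $z(\lambda^r+\mathbf{u})-z(\lambda^r)=\min\{\delta,\partial^r_{-e}\}-\max\{0,\delta-\Delta^r_{+e}\}$, which is nondecreasing on $[0,\delta^{\ast}]$ and nonincreasing afterwards, peaking at $\delta^{\ast}$ with value exactly $\delta^{\ast}$.

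It remains to verify the two properties in Definition~\ref{def:maximal-ascent}. The hypothesis $\min\{\Delta^r_{+e},\partial^r_{-e}\}\neq 0$ forces $\delta^{\ast}>0$ (both quantities are nonnegative by definition), hence $z(\lambda^r+\delta^{\ast}(-\mathfrak{e}_e))>z(\lambda^r)$, giving condition~(i). For condition~(ii), at $\delta=\delta^{\ast}$ the pair $(\mathbf{x}^r,\mathbf{y}^r)$ still attains the above minima: on the tree side, $\mathbf{x}^r$ remains optimal (any $\mathbf{x}$ with $x_e=0$ achieving the restricted minimum $\partial^r_{-e}$ ties it when $\delta^{\ast}=\partial^r_{-e}\leq\partial^r_{-e}$); on the kstab side, $\mathbf{y}^r$ remains optimal (any $\mathbf{y}$ with $y_e=1$ achieving $\Delta^r_{+e}$ ties it when $\delta^{\ast}=\Delta^r_{+e}$). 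Since at $\delta^{\ast}$ at least one of the two restricted minima is attained and ties the original, $(\mathbf{x}^r,\mathbf{y}^r)$ is no longer the unique optimizer, so $\lambda^r+\delta^{\ast}(-\mathfrak{e}_e)$ is indeed a breakpoint of $z$.

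The only delicate point — and the place where sign-tracking can go wrong — is keeping straight that a decrease of $\lambda_e$ \emph{penalizes} trees using $e$ (since the reduced cost $w_e-\lambda_e$ grows) while \emph{rewarding} kstabs using $e$, which is exactly the opposite behavior from the $+\mathfrak{e}_e$ case handled in Theorem~\ref{thm:ascent:x0y1}. Apart from this bookkeeping, the argument is a routine transcription.
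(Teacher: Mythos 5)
Your proof is correct and follows essentially the same route as the paper's (which is given in the cited companion work): decompose $z(\lambda^r-\delta\mathfrak{e}_e)$ into the two subproblems, partition each feasible set by the value of the $e$-th coordinate, and read off the piecewise-linear expression $\min\{\delta,\partial^r_{-e}\}-\max\{0,\delta-\Delta^r_{+e}\}$ whose peak at $\delta^{\ast}=\min\{\Delta^r_{+e},\partial^r_{-e}\}$ delivers both the bound increase and the tie certifying a breakpoint. The sign bookkeeping (decreasing $\lambda_e$ penalizes trees containing $e$ and rewards kstabs containing $e$) is exactly the intended mirror of Theorem~\ref{thm:ascent:x0y1}.
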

\begin{proof}
See \cite[Theorem 4.3]{samerINOC2022}.
\end{proof}

\section{Experimental evaluation}
\label{sec:xp}

The main goal of our computational endeavour is to assess the strength of the LD bound 
$\zeta\hspace{0.1cm} = \hspace{0.1cm} 
\max_{\lambda \in \mathbb{Q}^{|E|}} \left\lbrace z(\lambda) \right\rbrace$ in~$\eqref{eq:ld_dual}$ over benchmark instances of the MSTCC problem. 
This is fundamental to verify the practicality of that reformulation, as well as to understand its limitations.

A second intention of the project is to offer a careful implementation of the complete algorithm as a free, open-source software package.
The code was crafted with attention to time and space efficiency, fairly tested for correctness, and is available in the \href{https://github.com/phillippesamer/stable-trees-ld-davol}{\sf LD-davol} repository on GitHub.
It welcomes collaboration towards extensions and facilitates the direct comparison with eventual algorithms designed for the MSTCC problem in the future, besides offering useful, general-purpose algorithmic components.
In the remainder of this section, we refer to our implementation of the algorithm by its repository name, \textsf{LD-davol}.

\subsection{Implementation details}
\label{sec:xp:implementation}

\textsf{LD-davol} is written in C++, with the support of two libraries integrating the COIN-OR project \citep{coin-or2003}, as we describe next.
We also include the preprocessing algorithm introduced by \cite{SamerUrrutiaLetters2015}, a collection of probing tests that removes variables and identifies implied conflicts in the original input instance.

Recall that the two building blocks of the dual algorithm presented in Section~\ref{sec:dual-algorithm} are a dual ascent initialization, followed by the Volume Algorithm.
For the latter, we use the implementation in COIN-OR Vol (see \url{https://github.com/coin-or/Vol}, and the overview document ``An implementation of the Volume Algorithm'' by F. Barahona and L. Ladanyi in the same repository).

There are two Lagrangean subproblems to solve in each iteration of both the dual ascent and the volume procedures.
We solve the minimum spanning tree subproblem in the original graph $G=(V,E)$ using the efficient implementation of Kruskal's algorithm in COIN-OR LEMON 1.3.1 \citep{lemon2011}, while we solve the fixed cardinality stable set subproblem in the conflict graph $H=(E,C)$ with a branch-and-cut algorithm, implemented using the Gurobi 9.5.1 solver. 

We reinforce formulation $\eqref{eq:ld:kstab1}-\eqref{eq:ld:kstab3}$ with two further classes of valid inequalities from the classic stable set polytope, exactly as first presented by \cite{SamerUrrutiaLetters2015} for the MSTCC problem.
Namely, odd-cycle inequalities
\begin{alignat}{2}
& \sum_{u \in U} y_u  \leq \frac{|U| - 1}{2}, \ &\ & \text{ for each } U \subset E \text{ inducing an odd-cycle in } H, \label{xp:kstab:oci}
\end{alignat}
are added dynamically using the separation algorithm of \cite[Remark 1]{GerardsSchrijver1986},
while maximal clique inequalities
\begin{alignat}{2}
& \sum_{u \in Q} y_u \leq 1,\  &\ & \text{ for each } Q \subset E \text{ inducing a maximal clique in } H,    \label{xp:kstab:cliques}
\end{alignat}
are enumerated \textit{a priori} using the algorithm of \cite{tomita2006}, since this can be done efficiently over the MSTCC benchmark instances.
The interested reader is referred to \cite{SamerUrrutiaLetters2015}, as well as the eminently readable tutorial by \cite{2012Rebennack}.

\subsection{Experimental setup and benchmark instances}
\label{sec:xp:setup}

Our computational evaluation was performed on a desktop machine with an Intel\textsuperscript{\tiny\textregistered} Core\textsuperscript{\texttrademark} i5-8400 processor, with 6 CPU cores at 2.80GHz, and 16GB of RAM, runnning GNU/Linux kernel 5.4.0 under the Ubuntu 18.04.1 distribution.
All the code is compiled with g++ 7.5.0, and we consider a numerical precision of $10^{-10}$.
We limit the execution time to 1 hour, allowing the dual ascent procedure to run for at most 30 minutes, and the volume algorithm to run for the remaining time.

After preliminary experiments with the different algorithm parameters, we considered that the following combination exhibits better performance.
Dual ascent follows the first maximal ascent direction available in each iteration (instead of identifying the steepest ascent).
The volume algorithm implementation from COIN-OR is used with default parameters, except for screen log settings and warm-starting with the multipliers found by dual ascent.
Gurobi 9.5.1 is used with default settings, except for screen log settings and switches to indicate the presence of the callback for user cuts.
Odd-cycle inequalities are generated only at the root node of the enumeration tree, with the following strategy for balancing bound quality and cut pool size. When separating a relaxation solution, only the most violated cut and those close to being orthogonal to it are added; we accept hyperplanes having inner product of $0.01$ or less with the most violated one.

There are two sets of benchmark instances for evaluating MSTCC algorithms. 
The original one was proposed by \cite{Zhang2011}, and more recently \cite{CarrabsAnnals2018} introduced a new collection.
The total number of instances can be misleading, as only a small fraction correspond to interesting (\textit{i.e.} computationally challenging) problems.
Moreover, it is not possible to discriminate the hard ones by the input size, especially in the latter collection.
More specifically, the available problem instances fall into three categories.
\begin{enumerate}[i.]
\item \textit{Type 1} instances in \cite{Zhang2011}: 23 instances, most of which are difficult; 12 still have an open optimality gap in the experiments discussed in the literature.
\item \textit{Type 2} instances in \cite{Zhang2011}: 27 instances, all of which are trivial; the preprocessing algorithm of \cite{SamerUrrutiaLetters2015} solves (or reduces to a classic MST problem without conflicts) all of them in negligible time.
\item Instances introduced by \cite{CarrabsAnnals2018}: 180 instances, 107 of which (spanning each group of the collection, ordered by $|E|$) are easily solved within few seconds. 
The remaining 73 instances are interesting. The collection was only considered in that original work and continuing research from the same group \citep{CarrabsNetworks2021,Carrabs2019GA}.
\end{enumerate}

In summary, only instances in (\textit{i}) and less than half of the large collection in (\textit{iii}) serve the purpose of benchmarking  MSTCC algorithms, in our opinion.
Our discussion contemplates both benchmarks in full, but we choose to include full numerical results for the instances in (\textit{i}) in the next section, while longer tables corresponding to (\textit{iii}) are present in Appendix~\ref{sec:appendix1} (online supplement).

\subsection{Numerical results}
\label{sec:xp:numerical}

We present the information on bound quality and computing time for three classes of dual bounds: the combinatorial bound corresponding to the kstab relaxation (also the first subproblem solved in \textsf{LD-davol}), the LP relaxation bound, and the LD bound, \textit{i.e.} the approximation of $\zeta$ by \textsf{LD-davol}.
For a fair, unbiased comparison, note that the linear program whose bound we refer by LP is also reinforced with odd-cycle and clique inequalities in $\eqref{xp:kstab:oci}-\eqref{xp:kstab:cliques}$.

Table~\ref{tab:xp-zhang-type1} covers \textit{type 1} instances in the original benchmark of \cite{Zhang2011} (apart from three that could be identified efficiently as infeasible in previous works). In this set, a problem defined on a graph $(V,E)$ and conflict set $C$ has identifier \texttt{z|V|-|E|-|C|}.
Tables~\ref{tab:xp-carrabs-n25}, \ref{tab:xp-carrabs-n50}, \ref{tab:xp-carrabs-n75}, and \ref{tab:xp-carrabs-n100} in Appendix~\ref{sec:appendix1} (online supplement) contain the corresponding results over instances proposed by \cite{CarrabsAnnals2018}. 
The second column in each table contains the instance optimal value, or the best dual bound reported in the literature (we mark instances with unknown optimal solution with an asterisk\textsuperscript{*}).

Given the time limit that we allocate to the dual algorithms, we only report \textsf{LD-davol} results for instances where the kstab bound is computed within 1800 seconds.
If that is not the case, the kstab bound appears with a mark ($z^\dagger$). Moreover, we use boldface ($\mathbf{z^\dagger}$) in case this bound is actually stronger than those previously appearing in the literature.
We remark that $\zeta$, or any Lagrangean bound, is greater than or equal to the LP bound. Nevertheless, in the seven cases where the approximation attained by \textsf{LD-davol} is an inferior bound, a negative number appears in the {\small \textit{\% above LP}} column.
Finally, if the Lagrangean bound is better than the previously best known bound (applies only to instances with unknown optima), a negative value in bold appears in the {\small \textit{\% from OPT}} column.

\begin{table}[t]
\centering
\setstretch{1.35}
\caption{Results attained over hard instances in the original benchmark.}
\label{tab:xp-zhang-type1}
\scriptsize
\begin{tabular}{lcccccccccccc}
\toprule
\multicolumn{2}{c}{Instance} && \multicolumn{2}{c}{ KSTAB } && \multicolumn{2}{c}{ LP } && \multicolumn{4}{c}{ LD-davol } \\ 
  \cmidrule{1-2}  \cmidrule{4-5} \cmidrule{7-8} \cmidrule{10-13} 
ID & OPT && Bound & Time (s) && Bound & Time (s) && Bound & Time (s) & \% above LP & \% from OPT \\
\midrule
z50-200-199                     &  $708$                                            &&  612                        &  0.0                                         &&  706         &  0.0                                    &&  705         &  1.2         &  -0.14       &  0.4                    \\          
z50-200-398                     &  $770$                                            &&  652                        &  0.0                                         &&  770         &  0.1                                    &&  770         &  1.4         &  0           &  0                      \\          
z50-200-597                     &  $917$                                            &&  726                        &  0.0                                         &&  876         &  0.1                                    &&  900         &  12.7        &  2.74        &  1.9                    \\          
z50-200-995                     &  $1324$                                           &&  1164                       &  0.3                                         &&  1037        &  0.0                                    &&  1251        &  315.9       &  20.64       &  5.5                    \\          
\\
z100-300-448                    &  $4041$                                           &&  3440                       &  0.0                                         &&  4038        &  0.6                                    &&  4037        &  5.0         &  -0.02       &  0.1                    \\          
z100-300-897                    &  $5658$                                           &&  4785                       &  0.0                                         &&  5070        &  0.4                                    &&  5371        &  1402.2      &  5.94        &  5.1                    \\          
z100-300-1344                   &  $6635.4^*$                                       &&  6970                       &  563.1                                       &&  5479        &  0.2                                    &&  6970        &  3602.9      &  27.21       &  \textbf{-5.0}          \\          
\\
z100-500-1247                   &  $4275$                                           &&  3454                       &  0.0                                         &&  4275        &  0.7                                    &&  4275        &  10.0        &  0.02        &  0                      \\          
z100-500-2495                   &  $5997$                                           &&  5022                       &  0.1                                         &&  5363        &  0.4                                    &&  5693        &  2225.9      &  6.15        &  5.1                    \\          
z100-500-3741                   &  $6707.8^*$                                       &&  6101                       &  2.5                                         &&  5830        &  0.3                                    &&  6101        &  3609.2      &  4.65        &  9.0                    \\          
z100-500-6237                   &  $7729.3^*$                                       &&  $\mathbf{8506}^\dagger$    &  1800.0                                      &&  6789        &  0.3                                    &&   -          &   -          &   -          &  -                      \\          
z100-500-12474                  &  $10560.2^*$                                      &&  $10506^\dagger$            &  1800.0                                      &&  9008        &  1.3                                    &&   -          &   -          &   -          &  -                      \\          
\\
z200-600-1797                   &  $13171.2^*$                                      &&  12213                      &  0.1                                         &&  12580       &  5.5                                    &&  12993       &  3603.7      &  3.28        &  1.4                    \\          
z200-600-3594                   &  $17595.0^*$                                      &&  $\mathbf{17785}^\dagger$   &  1800.0                                      &&  14763       &  2.5                                    &&   -          &   -          &   -          &  -                      \\          
\\
z200-800-3196                   &  $20941.5^*$                                      &&  18477                      &  0.0                                         &&  20002       &  5.0                                    &&  20437       &  3609.3      &  2.17        &  2.4                    \\          
z200-800-6392                   &  $26526.7^*$                                      &&  $\mathbf{27124}^\dagger$   &  1800.0                                      &&  22923       &  3.3                                    &&   -          &   -          &   -          &  -                      \\          
z200-800-9588                   &  $30634.2^*$                                      &&  $\mathbf{31132}^\dagger$   &  1800.0                                      &&  27616       &  2.5                                    &&   -          &   -          &   -          &  -                      \\          
z200-800-15980                  &  $36900.2^*$                                      &&  $34648^\dagger$            &  1800.0                                      &&  32050       &  1.6                                    &&   -          &   -          &   -          &  -                      \\          
\\
z300-1000-4995                  &  $51398.4^*$                                      &&  $\mathbf{51621}^\dagger$   &  1800.0                                      &&  45599       &  10.5                                   &&   -          &   -          &   -          &  -                      \\          
z300-1000-9990                  &  $61878.9^*$                                      &&  $61732^\dagger$            &  1800.0                                      &&  54593       &  16.4                                   &&   -          &   -          &   -          &  -                      \\          
\bottomrule
\end{tabular}
\end{table}

We read from Table~\ref{tab:xp-zhang-type1} that the Lagrangean bound can be up to 27.21\% above the LP relaxation one.
We consider it even more remarkable that \textsf{LD-davol} computes $\zeta$ exactly and this actually matches the optimum in 2 instances in this collection, and in 73 instances out of 180 in the remaining tables.
Otherwise, the bound is within 9\% of the optimum. This figure actually corresponds to one of two outliers in this table, where \textsf{LD-davol} does not improve on the initial kstab bound; disregarding instance \texttt{z100-500-3741}, the bound is within 5.5\% of the optimum across all experiments.

Concerning the instances introduced by \cite{CarrabsAnnals2018}, the bound is within 
\begin{small}
\begin{enumerate}[(i)]
\item 2.1\% of the optimum in instances with 25 vertices ($60 \leq |E| \leq 120$, $18 \leq |C| \leq 500$);
\item 4.4\% of the optimum in instances with 50 vertices ($245 \leq |E| \leq 490$, $299 \leq |C| \leq 8387$);
\item 2.6\% of the optimum in instances with 75 vertices ($555 \leq |E| \leq 1110$, $1538 \leq |C| \leq 43085$);
\item 0.1\% of the optimum in instances with 100 vertices ($990 \leq |E| \leq 1980$, $4896 \leq |C| \leq 137145$).
\end{enumerate}
\end{small}

The initial kstab bound is the only one computed in 8 out of 20 instances in Table~\ref{tab:xp-zhang-type1} (45 out of 180 instances in the remaining tables).
Nevertheless, in 5 of these cases (respectively, in 39 of those 45) it is stronger than the previously known best bound. Note that, even though the machines and implementations cannot be compared directly, the 1800 second time limit set for this initial combinatorial relaxation is much lower than the standard (5000s) used in the literature of the MSTCC problem.

The main negative remark is as expected: the LD bound might be too expensive to compute.
Even though it can be determined in few seconds for a number of instances (\textit{e.g.} at most one minute for 96 instances across all tables),  the execution of \textsf{LD-davol} is terminated due to the one hour time limit in 4~instances appearing in Table~\ref{tab:xp-zhang-type1} (29 appearing in the other tables).
An intuitive rule of thumb is that \textsf{LD-davol} yields stronger bounds in reasonable time as long as the combinatorial relaxation bound (the initial kstab problem) can be computed in reasonable time.

We avoid direct comparison of implementations/solvers altogether. As declared in the beginning of this section, our goal is to assess the strength and practicality of our ideas: exploring fixed cardinality stable sets and the reformulation by LD.
It should be clear from our numerical results that the method yields high-quality dual bounds in the allotted computing time. 
It is probably not suited for embedding in a branch-and-bound scheme without successful work on heuristic aspects \textit{e.g.} learning effective \textsf{LD-davol} parameters (especially time limit in each node), and designing construction and local search methods exploring subproblem solutions and dual information. 
Alternatively, one could experiment with calling \textsf{LD-davol} selectively in a branch-and-cut framework to strengthen dual bounds, \textit{e.g.} when an incumbent solution is found, or when the optimality gap is not decreasing effectively.

Additional ideas that we leave for future work include 
improving the kstab subproblem solver, 
fine-tuning the Volume Algorithm to perform faster, 
experimenting with different subgradient methods \textit{e.g.} the sophisticated framework made available by \cite{frangioniComputational2017}, 
implementing \textit{fix heuristics} to search for integer feasible points from the fractional solutions produced by the Volume Algorithm, 
as well as designing local search algorithms to explore neighbourhoods of the kstab and spanning tree solutions found during the Lagrangean subproblems.

\section{Concluding remarks}
\label{sec:conclusion}

Stable spanning trees are not only interesting structures in combinatorial optimization, but pose a computationally challenging problem.
We explore a new relaxation (fixed cardinality stable sets) to present polyhedral results and to derive stronger Lagrangean bounds.
The latter builds on a careful analysis of different relaxation schemes, both old and new.
Our Lagrangean decomposition (LD) bounds are also evaluated in practice, using a dual method comprising an original dual-ascent initialization followed by the Volume Algorithm.
Finally, we also made great efforts to offer a high-quality, useful, open-source software in a free repository.

The LD bound actually matches the optimum in 75 out of 200 benchmark instances.
We verify that, in at least 146 of these instances (where the kstab subproblem can be solved fast enough), the LD bound is within 5.5\% of the optimum or the best known bound.
In 44 of the remaining instances, the initial combinatorial bound from kstabs at least improves the previously known best bounds.

We reinforce the position put forth at the end of the introduction. 
In light of the progress in MILP computation, it seems worthwhile to further investigate the strategy of LD based on harder subproblems, possibly replacing the common sense boundary of weakly \textsf{NP}-hard choices by the weaker requirement that our choice be \textit{computationally tractable}.



\appendix
\clearpage
\section{Further numerical results (\textit{online supplement})}
\label{sec:appendix1}

Tables~\ref{tab:xp-carrabs-n25}, \ref{tab:xp-carrabs-n50}, \ref{tab:xp-carrabs-n75}, and \ref{tab:xp-carrabs-n100} in this appendix (online supplement) contain the results corresponding to instances proposed by \cite{CarrabsAnnals2018}. 
Since this set includes five different instances of each combination of problem dimensions, those authors identify each problem by \texttt{|V|\_|E|\_|C|\_r}, where \texttt{r} is the seed used in a random number generator.

The discussion of these additional results is contained in Section~\ref{sec:xp:numerical}.

\begin{table}[t]
\centering
\setstretch{1.35}
\caption{Results attained over instances with 25 vertices in the second benchmark.}
\label{tab:xp-carrabs-n25}
\scriptsize
\begin{tabular}{lcccccccccccc}
\toprule
\multicolumn{2}{c}{Instance} && \multicolumn{2}{c}{ KSTAB } && \multicolumn{2}{c}{ LP } && \multicolumn{4}{c}{ LD-davol } \\ 
  \cmidrule{1-2}  \cmidrule{4-5} \cmidrule{7-8} \cmidrule{10-13} 
ID & OPT && Bound & Time (s) && Bound & Time (s) && Bound & Time (s) & \% above LP & \% from OPT \\
\midrule
25\_60\_18\_1                 &  $347$                                            &&  332               &  0.0                                         &&  347         &  0.0                                    &&  347         &  0.2         &  0           &  0                      \\          
25\_60\_18\_7                 &  $389$                                            &&  365               &  0.0                                         &&  389         &  0.0                                    &&  389         &  0.1         &  0           &  0                      \\          
25\_60\_18\_13                &  $353$                                            &&  337               &  0.0                                         &&  353         &  0.0                                    &&  353         &  0.1         &  0           &  0                      \\          
25\_60\_18\_19                &  $346$                                            &&  341               &  0.0                                         &&  346         &  0.0                                    &&  346         &  0.1         &  0           &  0                      \\          
25\_60\_18\_25                &  $336$                                            &&  326               &  0.0                                         &&  336         &  0.0                                    &&  336         &  0.1         &  0           &  0                      \\          
25\_60\_71\_31                &  $381$                                            &&  367               &  0.0                                         &&  380         &  0.0                                    &&  380         &  0.5         &  0           &  0.3                    \\          
25\_60\_71\_37                &  $390$                                            &&  369               &  0.0                                         &&  382         &  0.0                                    &&  382         &  0.8         &  0           &  2.1                    \\          
25\_60\_71\_43                &  $372$                                            &&  353               &  0.0                                         &&  372         &  0.0                                    &&  372         &  0.3         &  0           &  0                      \\          
25\_60\_71\_49                &  $357$                                            &&  346               &  0.0                                         &&  357         &  0.0                                    &&  357         &  0.3         &  0           &  0                      \\          
25\_60\_71\_55                &  $406$                                            &&  387               &  0.0                                         &&  406         &  0.0                                    &&  406         &  0.3         &  0           &  0                      \\          
25\_60\_124\_61               &  $385$                                            &&  380               &  0.0                                         &&  385         &  0.0                                    &&  385         &  0.8         &  0           &  0                      \\          
25\_60\_124\_67               &  $432$                                            &&  427               &  0.0                                         &&  432         &  0.0                                    &&  432         &  0.5         &  0           &  0                      \\          
25\_60\_124\_73               &  $458$                                            &&  446               &  0.0                                         &&  451         &  0.0                                    &&  458         &  3.4         &  1.55        &  0                      \\          
25\_60\_124\_79               &  $400$                                            &&  383               &  0.0                                         &&  399         &  0.0                                    &&  400         &  2.4         &  0.25        &  0                      \\          
25\_60\_124\_85               &  $420$                                            &&  407               &  0.0                                         &&  419         &  0.0                                    &&  420         &  1.5         &  0.24        &  0                      \\          
25\_90\_41\_91                &  $311$                                            &&  300               &  0.0                                         &&  311         &  0.0                                    &&  311         &  0.3         &  0           &  0                      \\          
25\_90\_41\_97                &  $306$                                            &&  298               &  0.0                                         &&  306         &  0.0                                    &&  306         &  0.1         &  0           &  0                      \\          
25\_90\_41\_103               &  $299$                                            &&  285               &  0.0                                         &&  299         &  0.0                                    &&  299         &  0.2         &  0           &  0                      \\          
25\_90\_41\_109               &  $297$                                            &&  288               &  0.0                                         &&  297         &  0.0                                    &&  297         &  0.2         &  0           &  0                      \\          
25\_90\_41\_115               &  $318$                                            &&  313               &  0.0                                         &&  318         &  0.0                                    &&  318         &  0.1         &  0           &  0                      \\          
25\_90\_161\_121              &  $305$                                            &&  293               &  0.0                                         &&  305         &  0.0                                    &&  305         &  0.4         &  0           &  0                      \\          
25\_90\_161\_127              &  $339$                                            &&  332               &  0.0                                         &&  339         &  0.0                                    &&  339         &  0.4         &  0           &  0                      \\          
25\_90\_161\_133              &  $344$                                            &&  333               &  0.0                                         &&  344         &  0.0                                    &&  344         &  0.4         &  0           &  0                      \\          
25\_90\_161\_139              &  $329$                                            &&  308               &  0.0                                         &&  328         &  0.0                                    &&  328         &  0.6         &  0           &  0.3                    \\          
25\_90\_161\_145              &  $326$                                            &&  307               &  0.0                                         &&  325         &  0.0                                    &&  325         &  0.8         &  0           &  0.3                    \\          
25\_90\_281\_151              &  $349$                                            &&  330               &  0.0                                         &&  349         &  0.0                                    &&  349         &  2.1         &  0           &  0                      \\          
25\_90\_281\_157              &  $385$                                            &&  355               &  0.0                                         &&  373         &  0.0                                    &&  379         &  14.5        &  1.61        &  1.6                    \\          
25\_90\_281\_163              &  $335$                                            &&  325               &  0.0                                         &&  333         &  0.0                                    &&  334         &  3.7         &  0.30        &  0.3                    \\          
25\_90\_281\_169              &  $348$                                            &&  335               &  0.0                                         &&  338         &  0.0                                    &&  344         &  6.2         &  1.78        &  1.1                    \\          
25\_90\_281\_175              &  $357$                                            &&  342               &  0.0                                         &&  357         &  0.0                                    &&  357         &  2.5         &  0           &  0                      \\          
25\_120\_72\_181              &  $282$                                            &&  280               &  0.0                                         &&  282         &  0.0                                    &&  282         &  0.3         &  0           &  0                      \\          
25\_120\_72\_187              &  $294$                                            &&  284               &  0.0                                         &&  294         &  0.0                                    &&  294         &  0.3         &  0           &  0                      \\          
25\_120\_72\_193              &  $284$                                            &&  283               &  0.0                                         &&  284         &  0.0                                    &&  284         &  0.3         &  0           &  0                      \\          
25\_120\_72\_199              &  $281$                                            &&  267               &  0.0                                         &&  281         &  0.0                                    &&  281         &  0.3         &  0           &  0                      \\          
25\_120\_72\_205              &  $292$                                            &&  289               &  0.0                                         &&  292         &  0.0                                    &&  292         &  0.3         &  0           &  0                      \\          
25\_120\_286\_211             &  $321$                                            &&  315               &  0.0                                         &&  321         &  0.0                                    &&  321         &  0.7         &  0           &  0                      \\          
25\_120\_286\_217             &  $317$                                            &&  310               &  0.0                                         &&  317         &  0.0                                    &&  317         &  1.1         &  0           &  0                      \\          
25\_120\_286\_223             &  $284$                                            &&  283               &  0.0                                         &&  284         &  0.0                                    &&  284         &  0.5         &  0           &  0                      \\          
25\_120\_286\_229             &  $311$                                            &&  304               &  0.0                                         &&  311         &  0.0                                    &&  311         &  0.7         &  0           &  0                      \\          
25\_120\_286\_235             &  $290$                                            &&  283               &  0.0                                         &&  290         &  0.0                                    &&  290         &  0.4         &  0           &  0                      \\          
25\_120\_500\_241             &  $329$                                            &&  309               &  0.0                                         &&  322         &  0.0                                    &&  326         &  4.3         &  1.24        &  0.9                    \\          
25\_120\_500\_247             &  $339$                                            &&  327               &  0.0                                         &&  330         &  0.0                                    &&  334         &  6.3         &  1.21        &  1.5                    \\          
25\_120\_500\_253             &  $368$                                            &&  361               &  0.0                                         &&  363         &  0.0                                    &&  367         &  24.7        &  1.10        &  0.3                    \\          
25\_120\_500\_259             &  $311$                                            &&  304               &  0.0                                         &&  308         &  0.0                                    &&  310         &  8.5         &  0.65        &  0.3                    \\          
25\_120\_500\_265             &  $321$                                            &&  315               &  0.0                                         &&  321         &  0.0                                    &&  321         &  2.6         &  0           &  0                      \\          
\bottomrule
\end{tabular}
\end{table}

\begin{table}[t]
\centering
\setstretch{1.35}
\caption{Results attained over instances with 50 vertices in the second benchmark.}
\label{tab:xp-carrabs-n50}
\scriptsize
\begin{tabular}{lcccccccccccc}
\toprule
\multicolumn{2}{c}{Instance} && \multicolumn{2}{c}{ KSTAB } && \multicolumn{2}{c}{ LP } && \multicolumn{4}{c}{ LD-davol } \\ 
  \cmidrule{1-2}  \cmidrule{4-5} \cmidrule{7-8} \cmidrule{10-13} 
ID & OPT && Bound & Time (s) && Bound & Time (s) && Bound & Time (s) & \% above LP & \% from OPT \\
\midrule
50\_245\_299\_271             &  $619$                                            &&  573               &  0.0                                         &&  619         &  0.0                                    &&  619         &  1.3         &  0           &  0                      \\          
50\_245\_299\_277             &  $604$                                            &&  593               &  0.0                                         &&  604         &  0.0                                    &&  604         &  0.9         &  0           &  0                      \\          
50\_245\_299\_283             &  $634$                                            &&  631               &  0.0                                         &&  634         &  0.0                                    &&  634         &  0.7         &  0           &  0                      \\          
50\_245\_299\_289             &  $616$                                            &&  600               &  0.0                                         &&  616         &  0.0                                    &&  616         &  1.2         &  0           &  0                      \\          
50\_245\_299\_295             &  $595$                                            &&  577               &  0.0                                         &&  595         &  0.0                                    &&  595         &  1.3         &  0           &  0                      \\          
50\_245\_1196\_301            &  $678$                                            &&  663               &  0.0                                         &&  670         &  0.0                                    &&  674         &  124.9       &  0.60        &  0.6                    \\          
50\_245\_1196\_307            &  $681$                                            &&  652               &  0.0                                         &&  669         &  0.1                                    &&  678         &  134.2       &  1.35        &  0.4                    \\          
50\_245\_1196\_313            &  $709$                                            &&  669               &  0.0                                         &&  685         &  0.0                                    &&  695         &  184.4       &  1.46        &  2.0                    \\          
50\_245\_1196\_319            &  $639$                                            &&  625               &  0.0                                         &&  637         &  0.0                                    &&  637         &  47.9        &  0           &  0.3                    \\          
50\_245\_1196\_325            &  $681$                                            &&  656               &  0.0                                         &&  663         &  0.0                                    &&  672         &  125.9       &  1.36        &  1.3                    \\          
50\_245\_2093\_331            &  $791.20^*$                                       &&  758               &  1.5                                         &&  714         &  0.0                                    &&  774         &  3607.0      &  8.40        &  2.2                    \\          
50\_245\_2093\_337            &  $835$                                            &&  788               &  1.2                                         &&  739         &  0.0                                    &&  803         &  3601.6      &  8.66        &  3.8                    \\          
50\_245\_2093\_343            &  $773.23^*$                                       &&  742               &  3.3                                         &&  699         &  0.1                                    &&  762         &  3609.3      &  9.01        &  1.5                    \\          
50\_245\_2093\_349            &  $820.02^*$                                       &&  769               &  1.5                                         &&  721         &  0.0                                    &&  784         &  3603.9      &  8.74        &  4.4                    \\          
50\_245\_2093\_355            &  $769$                                            &&  739               &  0.8                                         &&  715         &  0.0                                    &&  758         &  3282.8      &  6.01        &  1.4                    \\          
50\_367\_672\_361             &  $570$                                            &&  545               &  0.0                                         &&  570         &  0.0                                    &&  570         &  1.9         &  0           &  0                      \\          
50\_367\_672\_367             &  $561$                                            &&  540               &  0.0                                         &&  561         &  0.1                                    &&  561         &  1.8         &  0           &  0                      \\          
50\_367\_672\_373             &  $573$                                            &&  565               &  0.0                                         &&  573         &  0.0                                    &&  573         &  1.7         &  0           &  0                      \\          
50\_367\_672\_379             &  $560$                                            &&  551               &  0.0                                         &&  560         &  0.0                                    &&  560         &  1.9         &  0           &  0                      \\          
50\_367\_672\_385             &  $549$                                            &&  539               &  0.0                                         &&  549         &  0.0                                    &&  549         &  1.8         &  0           &  0                      \\          
50\_367\_2687\_391            &  $612$                                            &&  589               &  0.0                                         &&  601         &  0.1                                    &&  607         &  228.9       &  1.00        &  0.8                    \\          
50\_367\_2687\_397            &  $615$                                            &&  593               &  0.0                                         &&  600         &  0.1                                    &&  608         &  254.4       &  1.33        &  1.1                    \\          
50\_367\_2687\_403            &  $587$                                            &&  566               &  0.0                                         &&  580         &  0.1                                    &&  585         &  129.7       &  0.86        &  0.3                    \\          
50\_367\_2687\_409            &  $634$                                            &&  604               &  0.0                                         &&  612         &  0.0                                    &&  626         &  279.4       &  2.29        &  1.3                    \\          
50\_367\_2687\_415            &  $643$                                            &&  623               &  0.1                                         &&  638         &  0.1                                    &&  640         &  108.4       &  0.31        &  0.5                    \\          
50\_367\_4702\_421            &  $701.26^*$                                       &&  690               &  7.7                                         &&  647         &  0.1                                    &&  690         &  3601.3      &  6.65        &  1.6                    \\          
50\_367\_4702\_427            &  $719.45^*$                                       &&  696               &  1.9                                         &&  664         &  0.1                                    &&  703         &  3608.4      &  5.87        &  2.3                    \\          
50\_367\_4702\_433            &  $723.89^*$                                       &&  721               &  13.4                                        &&  676         &  0.1                                    &&  721         &  3606.4      &  6.66        &  0.4                    \\          
50\_367\_4702\_439            &  $669.84^*$                                       &&  668               &  14.7                                        &&  623         &  0.1                                    &&  668         &  3601.0      &  7.22        &  0.3                    \\          
50\_367\_4702\_445            &  $737.31^*$                                       &&  723               &  5.7                                         &&  687         &  0.1                                    &&  725         &  3609.0      &  5.53        &  1.7                    \\          
50\_490\_1199\_451            &  $548$                                            &&  532               &  0.0                                         &&  548         &  0.1                                    &&  548         &  2.2         &  0           &  0                      \\          
50\_490\_1199\_457            &  $530$                                            &&  514               &  0.0                                         &&  530         &  0.1                                    &&  530         &  2.1         &  0           &  0                      \\          
50\_490\_1199\_463            &  $549$                                            &&  541               &  0.0                                         &&  549         &  0.0                                    &&  549         &  2.7         &  0           &  0                      \\          
50\_490\_1199\_469            &  $540$                                            &&  528               &  0.0                                         &&  540         &  0.1                                    &&  540         &  2.2         &  0           &  0                      \\          
50\_490\_1199\_475            &  $540$                                            &&  527               &  0.0                                         &&  540         &  0.0                                    &&  540         &  2.4         &  0           &  0                      \\          
50\_490\_4793\_481            &  $594$                                            &&  573               &  0.1                                         &&  586         &  0.1                                    &&  592         &  294.8       &  1.02        &  0.3                    \\          
50\_490\_4793\_487            &  $579$                                            &&  554               &  0.0                                         &&  564         &  0.1                                    &&  570         &  323.7       &  1.06        &  1.6                    \\          
50\_490\_4793\_493            &  $589$                                            &&  574               &  0.0                                         &&  585         &  0.1                                    &&  587         &  235.0       &  0.34        &  0.3                    \\          
50\_490\_4793\_499            &  $577$                                            &&  562               &  0.1                                         &&  567         &  0.1                                    &&  571         &  137.7       &  0.71        &  1.0                    \\          
50\_490\_4793\_505            &  $592$                                            &&  581               &  0.2                                         &&  583         &  0.1                                    &&  589         &  262.5       &  1.03        &  0.5                    \\          
50\_490\_8387\_511            &  $631.43^*$                                       &&  615               &  2.8                                         &&  597         &  0.2                                    &&  620         &  3604.6      &  3.85        &  1.8                    \\          
50\_490\_8387\_517            &  $626.72^*$                                       &&  613               &  10.1                                        &&  589         &  0.2                                    &&  613         &  3609.9      &  4.07        &  2.2                    \\          
50\_490\_8387\_523            &  $658.38^*$                                       &&  647               &  6.6                                         &&  615         &  0.1                                    &&  647         &  3600.8      &  5.20        &  1.7                    \\          
50\_490\_8387\_529            &  $662.22^*$                                       &&  655               &  11.7                                        &&  618         &  0.1                                    &&  655         &  3609.3      &  5.99        &  1.1                    \\          
50\_490\_8387\_535            &  $641.31^*$                                       &&  635               &  8.3                                         &&  601         &  0.1                                    &&  635         &  3607.7      &  5.66        &  1.0                    \\          
\bottomrule
\end{tabular}
\end{table}

\begin{table}[t]
\centering
\setstretch{1.35}
\caption{Results attained over instances with 75 vertices in the second benchmark.}
\label{tab:xp-carrabs-n75}
\scriptsize
\begin{tabular}{lcccccccccccc}
\toprule
\multicolumn{2}{c}{Instance} && \multicolumn{2}{c}{ KSTAB } && \multicolumn{2}{c}{ LP } && \multicolumn{4}{c}{ LD-davol } \\ 
  \cmidrule{1-2}  \cmidrule{4-5} \cmidrule{7-8} \cmidrule{10-13} 
ID & OPT && Bound & Time (s) && Bound & Time (s) && Bound & Time (s) & \% above LP & \% from OPT \\
\midrule
75\_555\_1538\_541            &  $868$                                            &&  838                        &  0.0                                         &&  868         &  0.2                                    &&  868         &  5.0         &  0           &  0                      \\          
75\_555\_1538\_547            &  $871$                                            &&  858                        &  0.0                                         &&  871         &  0.2                                    &&  871         &  3.8         &  0           &  0                      \\          
75\_555\_1538\_553            &  $838$                                            &&  828                        &  0.0                                         &&  838         &  0.2                                    &&  838         &  4.7         &  0           &  0                      \\          
75\_555\_1538\_559            &  $855$                                            &&  830                        &  0.0                                         &&  855         &  0.2                                    &&  855         &  3.6         &  0           &  0                      \\          
75\_555\_1538\_565            &  $857$                                            &&  831                        &  0.0                                         &&  857         &  0.2                                    &&  857         &  3.5         &  0           &  0                      \\          
75\_555\_6150\_571            &  $1023.72^*$                                      &&  1018                       &  16.2                                        &&  966         &  0.1                                    &&  1018        &  3609.1      &  5.38        &  0.6                    \\          
75\_555\_6150\_577            &  $1008.82^*$                                      &&  997                        &  3.7                                         &&  958         &  0.1                                    &&  997         &  3606.2      &  4.07        &  1.2                    \\          
75\_555\_6150\_583            &  $987.31^*$                                       &&  985                        &  27.6                                        &&  932         &  0.2                                    &&  985         &  3603.2      &  5.69        &  0.2                    \\          
75\_555\_6150\_589            &  $985.64^*$                                       &&  958                        &  3.0                                         &&  937         &  0.3                                    &&  960         &  3608.5      &  2.45        &  2.6                    \\          
75\_555\_6150\_595            &  $962.55^*$                                       &&  953                        &  5.0                                         &&  921         &  0.2                                    &&  953         &  3600.4      &  3.47        &  1.0                    \\          
75\_555\_10762\_601           &  $1054.25^*$                                      &&  $\mathbf{1098}^\dagger$    &  1800.0                                      &&  1004        &  0.4                                    &&   -          &   -          &   -          &  -                      \\          
75\_555\_10762\_607           &  $1069.51^*$                                      &&  $\mathbf{1107}^\dagger$    &  1800.0                                      &&  1022        &  0.3                                    &&   -          &   -          &   -          &  -                      \\          
75\_555\_10762\_613           &  $1040.97^*$                                      &&  $\mathbf{1069}^\dagger$    &  1800.0                                      &&  985         &  0.4                                    &&   -          &   -          &   -          &  -                      \\          
75\_555\_10762\_619           &  $1006.30^*$                                      &&  $\mathbf{1036}^\dagger$    &  1800.0                                      &&  960         &  0.3                                    &&   -          &   -          &   -          &  -                      \\          
75\_555\_10762\_625           &  $1046.43^*$                                      &&  $\mathbf{1081}^\dagger$    &  1800.0                                      &&  997         &  0.5                                    &&   -          &   -          &   -          &  -                      \\          
75\_832\_3457\_631            &  $798$                                            &&  779                        &  0.0                                         &&  798         &  0.4                                    &&  798         &  6.6         &  0           &  0                      \\          
75\_832\_3457\_637            &  $821$                                            &&  801                        &  0.0                                         &&  820         &  0.5                                    &&  820         &  8.6         &  0           &  0.1                    \\          
75\_832\_3457\_643            &  $816$                                            &&  797                        &  0.0                                         &&  816         &  0.2                                    &&  815         &  7.7         &  -0.12       &  0.1                    \\          
75\_832\_3457\_649            &  $820$                                            &&  805                        &  0.0                                         &&  820         &  0.4                                    &&  820         &  8.6         &  0           &  0                      \\          
75\_832\_3457\_655            &  $815$                                            &&  800                        &  0.0                                         &&  815         &  0.4                                    &&  815         &  8.6         &  0           &  0                      \\          
75\_832\_13828\_661           &  $873.83^*$                                       &&  865                        &  5.5                                         &&  839         &  0.3                                    &&  865         &  3601.6      &  3.10        &  1.0                    \\          
75\_832\_13828\_667           &  $901.81^*$                                       &&  889                        &  6.6                                         &&  873         &  0.6                                    &&  889         &  3601.5      &  1.83        &  1.4                    \\          
75\_832\_13828\_673           &  $873.67^*$                                       &&  858                        &  5.1                                         &&  843         &  0.3                                    &&  858         &  3607.2      &  1.78        &  1.8                    \\          
75\_832\_13828\_679           &  $885.57^*$                                       &&  879                        &  23.9                                        &&  852         &  0.2                                    &&  879         &  3605.6      &  3.17        &  0.7                    \\          
75\_832\_13828\_685           &  $886.87^*$                                       &&  875                        &  5.4                                         &&  856         &  0.2                                    &&  875         &  3605.1      &  2.22        &  1.3                    \\          
75\_832\_24199\_691           &  $949.55^*$                                       &&  $\mathbf{965}^\dagger$     &  1800.0                                      &&  923         &  0.5                                    &&   -          &   -          &   -          &  -                      \\          
75\_832\_24199\_697           &  $907.80^*$                                       &&  $\mathbf{921}^\dagger$     &  1800.0                                      &&  884         &  0.5                                    &&   -          &   -          &   -          &  -                      \\          
75\_832\_24199\_703           &  $910.00^*$                                       &&  $\mathbf{925}^\dagger$     &  1800.0                                      &&  886         &  0.5                                    &&   -          &   -          &   -          &  -                      \\          
75\_832\_24199\_709           &  $943.98^*$                                       &&  $\mathbf{967}^\dagger$     &  1800.0                                      &&  922         &  0.4                                    &&   -          &   -          &   -          &  -                      \\          
75\_832\_24199\_715           &  $956.31^*$                                       &&  $\mathbf{974}^\dagger$     &  1800.0                                      &&  933         &  0.4                                    &&   -          &   -          &   -          &  -                      \\          
75\_1110\_6155\_721           &  $787$                                            &&  776                        &  0.0                                         &&  787         &  0.3                                    &&  787         &  12.6        &  0           &  0                      \\          
75\_1110\_6155\_727           &  $785$                                            &&  771                        &  0.0                                         &&  785         &  1.0                                    &&  785         &  13.7        &  0           &  0                      \\          
75\_1110\_6155\_733           &  $783$                                            &&  773                        &  0.0                                         &&  783         &  3.6                                    &&  783         &  8.9         &  0           &  0                      \\          
75\_1110\_6155\_739           &  $784$                                            &&  772                        &  0.0                                         &&  784         &  0.3                                    &&  784         &  7.8         &  0           &  0                      \\          
75\_1110\_6155\_745           &  $797$                                            &&  782                        &  0.0                                         &&  797         &  0.4                                    &&  797         &  13.2        &  0           &  0                      \\          
75\_1110\_24620\_751          &  $846.69^*$                                       &&  838                        &  4.9                                         &&  826         &  0.3                                    &&  838         &  3602.9      &  1.45        &  1.0                    \\          
75\_1110\_24620\_757          &  $829.23^*$                                       &&  828                        &  19.8                                        &&  805         &  0.4                                    &&  828         &  3606.9      &  2.86        &  0.1                    \\          
75\_1110\_24620\_763          &  $841.54^*$                                       &&  847                        &  93.8                                        &&  817         &  0.2                                    &&  847         &  3605.9      &  3.67        &  \textbf{-0.6}          \\          
75\_1110\_24620\_769          &  $841.62^*$                                       &&  836                        &  26.4                                        &&  814         &  0.3                                    &&  836         &  3603.4      &  2.70        &  0.7                    \\          
75\_1110\_24620\_775          &  $835.04^*$                                       &&  830                        &  18.3                                        &&  813         &  0.4                                    &&  830         &  3606.3      &  2.09        &  0.6                    \\          
75\_1110\_43085\_781          &  $868.72^*$                                       &&  $\mathbf{882}^\dagger$     &  1800.0                                      &&  856         &  0.6                                    &&   -          &   -          &   -          &  -                      \\          
75\_1110\_43085\_787          &  $853.45^*$                                       &&  $\mathbf{861}^\dagger$     &  1800.0                                      &&  840         &  0.8                                    &&   -          &   -          &   -          &  -                      \\          
75\_1110\_43085\_793          &  $884.67^*$                                       &&  $\mathbf{890}^\dagger$     &  1800.0                                      &&  867         &  0.6                                    &&   -          &   -          &   -          &  -                      \\          
75\_1110\_43085\_799          &  $853.00^*$                                       &&  $\mathbf{864}^\dagger$     &  1800.0                                      &&  841         &  0.7                                    &&   -          &   -          &   -          &  -                      \\          
75\_1110\_43085\_805          &  $853.98^*$                                       &&  $\mathbf{862}^\dagger$     &  1800.0                                      &&  844         &  0.8                                    &&   -          &   -          &   -          &  -                      \\          
\bottomrule
\end{tabular}
\end{table}

\begin{table}[t]
\centering
\setstretch{1.35}
\caption{Results attained over instances with 100 vertices in the second benchmark.}
\label{tab:xp-carrabs-n100}
\scriptsize
\begin{tabular}{lcccccccccccc}
\toprule
\multicolumn{2}{c}{Instance} && \multicolumn{2}{c}{ KSTAB } && \multicolumn{2}{c}{ LP } && \multicolumn{4}{c}{ LD-davol } \\ 
  \cmidrule{1-2}  \cmidrule{4-5} \cmidrule{7-8} \cmidrule{10-13} 
ID & OPT && Bound & Time (s) && Bound & Time (s) && Bound & Time (s) & \% above LP & \% from OPT \\
\midrule
100\_990\_4896\_811           &  $1119$                                           &&  1097                       &  0.0                                         &&  1119        &  1.2                                    &&  1118        &  15.1        &  -0.09       &  0.1                    \\          
100\_990\_4896\_817           &  $1137$                                           &&  1115                       &  0.0                                         &&  1137        &  0.7                                    &&  1137        &  20.4        &  0           &  0                      \\          
100\_990\_4896\_823           &  $1113$                                           &&  1076                       &  0.0                                         &&  1113        &  2.1                                    &&  1113        &  25.1        &  0           &  0                      \\          
100\_990\_4896\_829           &  $1110$                                           &&  1086                       &  0.0                                         &&  1110        &  1.3                                    &&  1110        &  17.4        &  0           &  0                      \\          
100\_990\_4896\_835           &  $1090$                                           &&  1063                       &  0.0                                         &&  1090        &  1.4                                    &&  1089        &  17.5        &  -0.09       &  0.1                    \\          
100\_990\_19583\_841          &  $1249.38^*$                                      &&  $\mathbf{1282}^\dagger$    &  1800.0                                      &&  1206        &  0.8                                    &&   -          &   -          &   -          &  -                      \\          
100\_990\_19583\_847          &  $1225.76^*$                                      &&  $\mathbf{1242}^\dagger$    &  1800.0                                      &&  1171        &  0.9                                    &&   -          &   -          &   -          &  -                      \\          
100\_990\_19583\_853          &  $1215.00^*$                                      &&  $\mathbf{1236}^\dagger$    &  1800.0                                      &&  1170        &  0.5                                    &&   -          &   -          &   -          &  -                      \\          
100\_990\_19583\_859          &  $1264.17^*$                                      &&  $\mathbf{1284}^\dagger$    &  1800.0                                      &&  1219        &  0.5                                    &&   -          &   -          &   -          &  -                      \\          
100\_990\_19583\_865          &  $1257.27^*$                                      &&  $\mathbf{1278}^\dagger$    &  1800.1                                      &&  1214        &  0.4                                    &&   -          &   -          &   -          &  -                      \\          
100\_990\_34269\_871          &  $1262.00^*$                                      &&  $\mathbf{1266}^\dagger$    &  1800.0                                      &&  1233        &  0.9                                    &&   -          &   -          &   -          &  -                      \\          
100\_990\_34269\_877          &  $1290.68^*$                                      &&  $\mathbf{1293}^\dagger$    &  1800.0                                      &&  1265        &  1.3                                    &&   -          &   -          &   -          &  -                      \\          
100\_990\_34269\_883          &  $1318.54^*$                                      &&  $1318^\dagger$             &  1800.0                                      &&  1295        &  1.3                                    &&   -          &   -          &   -          &  -                      \\          
100\_990\_34269\_889          &  $1282.38^*$                                      &&  $1275^\dagger$             &  1800.0                                      &&  1249        &  1.1                                    &&   -          &   -          &   -          &  -                      \\          
100\_990\_34269\_895          &  $1304.45^*$                                      &&  $\mathbf{1311}^\dagger$    &  1800.0                                      &&  1273        &  0.6                                    &&   -          &   -          &   -          &  -                      \\          
100\_1485\_11019\_901         &  $1079$                                           &&  1054                       &  0.1                                         &&  1079        &  2.3                                    &&  1078        &  41.5        &  -0.09       &  0.1                    \\          
100\_1485\_11019\_907         &  $1056$                                           &&  1038                       &  0.1                                         &&  1056        &  1.7                                    &&  1055        &  39.6        &  -0.09       &  0.1                    \\          
100\_1485\_11019\_913         &  $1059$                                           &&  1042                       &  0.1                                         &&  1059        &  0.9                                    &&  1059        &  28.8        &  0           &  0                      \\          
100\_1485\_11019\_919         &  $1046$                                           &&  1030                       &  0.1                                         &&  1046        &  1.5                                    &&  1046        &  32.8        &  0           &  0                      \\          
100\_1485\_11019\_925         &  $1072$                                           &&  1040                       &  0.1                                         &&  1072        &  2.9                                    &&  1072        &  58.7        &  0           &  0                      \\          
100\_1485\_44075\_931         &  $1143.95^*$                                      &&  $\mathbf{1152}^\dagger$    &  1800.0                                      &&  1114        &  1.0                                    &&   -          &   -          &   -          &  -                      \\          
100\_1485\_44075\_937         &  $1143.61^*$                                      &&  $\mathbf{1155}^\dagger$    &  1800.0                                      &&  1109        &  0.6                                    &&   -          &   -          &   -          &  -                      \\          
100\_1485\_44075\_943         &  $1137.62^*$                                      &&  $\mathbf{1144}^\dagger$    &  1800.0                                      &&  1109        &  2.3                                    &&   -          &   -          &   -          &  -                      \\          
100\_1485\_44075\_949         &  $1136.90^*$                                      &&  $\mathbf{1142}^\dagger$    &  1800.0                                      &&  1113        &  1.7                                    &&   -          &   -          &   -          &  -                      \\          
100\_1485\_44075\_955         &  $1134.63^*$                                      &&  $\mathbf{1145}^\dagger$    &  1800.0                                      &&  1106        &  0.5                                    &&   -          &   -          &   -          &  -                      \\          
100\_1485\_77131\_961         &  $1164.44^*$                                      &&  $\mathbf{1167}^\dagger$    &  1800.0                                      &&  1155        &  3.5                                    &&   -          &   -          &   -          &  -                      \\          
100\_1485\_77131\_967         &  $1168.20^*$                                      &&  $\mathbf{1170}^\dagger$    &  1800.0                                      &&  1156        &  2.4                                    &&   -          &   -          &   -          &  -                      \\          
100\_1485\_77131\_973         &  $1180.02^*$                                      &&  $\mathbf{1184}^\dagger$    &  1800.0                                      &&  1171        &  3.1                                    &&   -          &   -          &   -          &  -                      \\          
100\_1485\_77131\_979         &  $1183.53^*$                                      &&  $\mathbf{1185}^\dagger$    &  1800.0                                      &&  1174        &  3.1                                    &&   -          &   -          &   -          &  -                      \\          
100\_1485\_77131\_985         &  $1159.25^*$                                      &&  $1157^\dagger$             &  1801.1                                      &&  1152        &  2.2                                    &&   -          &   -          &   -          &  -                      \\          
100\_1980\_19593\_991         &  $1031$                                           &&  1023                       &  0.1                                         &&  1031        &  21.5                                   &&  1031        &  56.4        &  0           &  0                      \\          
100\_1980\_19593\_997         &  $1036$                                           &&  1028                       &  0.1                                         &&  1035        &  1.4                                    &&  1036        &  53.0        &  0.10        &  0                      \\          
100\_1980\_19593\_1003        &  $1024$                                           &&  1016                       &  0.1                                         &&  1024        &  3.2                                    &&  1024        &  60.5        &  0           &  0                      \\          
100\_1980\_19593\_1009        &  $1025$                                           &&  1018                       &  0.1                                         &&  1025        &  3.7                                    &&  1025        &  45.0        &  0           &  0                      \\          
100\_1980\_19593\_1015        &  $1028$                                           &&  1018                       &  0.1                                         &&  1028        &  2.9                                    &&  1028        &  45.8        &  0           &  0                      \\          
100\_1980\_78369\_1021        &  $1096.83^*$                                      &&  $\mathbf{1107}^\dagger$    &  1800.0                                      &&  1082        &  1.5                                    &&   -          &   -          &   -          &  -                      \\          
100\_1980\_78369\_1027        &  $1065.64^*$                                      &&  $\mathbf{1069}^\dagger$    &  1800.1                                      &&  1048        &  7.2                                    &&   -          &   -          &   -          &  -                      \\          
100\_1980\_78369\_1033        &  $1087.39^*$                                      &&  $\mathbf{1096}^\dagger$    &  1800.0                                      &&  1069        &  1.6                                    &&   -          &   -          &   -          &  -                      \\          
100\_1980\_78369\_1039        &  $1081.26^*$                                      &&  $\mathbf{1092}^\dagger$    &  1800.1                                      &&  1065        &  1.7                                    &&   -          &   -          &   -          &  -                      \\          
100\_1980\_78369\_1045        &  $1084.09^*$                                      &&  $\mathbf{1094}^\dagger$    &  1800.0                                      &&  1068        &  2.4                                    &&   -          &   -          &   -          &  -                      \\          
100\_1980\_137145\_1051       &  $1098.61^*$                                      &&  $\mathbf{1101}^\dagger$    &  1800.0                                      &&  1094        &  5.7                                    &&   -          &   -          &   -          &  -                      \\          
100\_1980\_137145\_1057       &  $1126.27^*$                                      &&  $1127^\dagger$             &  1800.0                                      &&  1121        &  8.0                                    &&   -          &   -          &   -          &  -                      \\          
100\_1980\_137145\_1063       &  $1111.27^*$                                      &&  $1112^\dagger$             &  1800.0                                      &&  1106        &  5.0                                    &&   -          &   -          &   -          &  -                      \\          
100\_1980\_137145\_1069       &  $1114.58^*$                                      &&  $\mathbf{1116}^\dagger$    &  1800.0                                      &&  1111        &  9.6                                    &&   -          &   -          &   -          &  -                      \\          
100\_1980\_137145\_1075       &  $1114.07^*$                                      &&  $1114^\dagger$             &  1800.0                                      &&  1109        &  5.3                                    &&   -          &   -          &   -          &  -                      \\          
\bottomrule
\end{tabular}
\end{table}

\end{document}